\documentclass[3p,12pt]{elsarticle}

\usepackage{amsmath,amssymb,mathtools}
\usepackage{amsthm}
\usepackage{mdframed}
\usepackage{bm}
\usepackage{graphicx}
\usepackage{booktabs,tabularx}
\usepackage{placeins}
\usepackage{hyperref}
\usepackage{tikz}
\usetikzlibrary{arrows.meta,positioning,calc,fit,shapes.geometric}

\usepackage{natbib}

\usepackage[english]{babel}
\usepackage{lineno}

\usepackage{amsfonts}

\newtheorem{theorem}{Theorem}[section]
\newtheorem{lemma}[theorem]{Lemma}
\newtheorem{proposition}[theorem]{Proposition}
\newtheorem{corollary}[theorem]{Corollary}

\theoremstyle{definition}
\newtheorem{definition}[theorem]{Definition}

\theoremstyle{remark}
\newtheorem{remark}[theorem]{Remark}

\newcolumntype{Y}{>{\raggedright\arraybackslash}X}

\begin{document}
\title{Path-Dependent Entropic Lagrangian for Probability Flows: Balance--Entropy Routing and Composable Information Potentials}

\author[TU]{Huilong Ren}
\ead{hlren@tongji.edu.cn}
\address[TU]{State Key Laboratory of Disaster Reduction in Civil Engineering, College of Civil Engineering, Tongji University, Shanghai 200092, China}

\begin{abstract}
Probability distributions are central to information theory, statistical inference, and modern probabilistic learning. Maximum entropy selects a probability state under prescribed constraints, but it does not specify how that state is reached, how probability is transported, or how dissipation and external information exchange are accounted for along the path. We develop a path-dependent entropic Lagrangian calculus that extends static state selection to probability-path evolution through restricted generators, upper-limit history terms, and explicit balance--entropy port routing. The construction yields the thermal state relation, conservative probability balance, and nonnegative production under standard mobility closure. Its KL/Shannon sector recovers maximum-entropy and Bayesian laws as stationary no-flux states, while time-dependent information potentials separate internal dissipation from supplied information power. Composable information and structural potentials control tails, sparsity, robustness, regularization, and nonlocal multimodality without changing the accounting architecture. Two numerical examples verify mass conservation, energy decomposition, and the total free-energy ledger.

\textbf{Keywords}: probability flow; maximum entropy; information dynamics; port routing; Kullback--Leibler divergence; composable information potentials; entropy production; Bayesian inference
\end{abstract}

\maketitle

\section{Introduction}\label{sec:introduction}

Probability distributions are a common mathematical state description for information theory, statistical inference, and much of modern probabilistic learning. Entropy, relative entropy, cross-entropy, and mutual information are functionals of probability laws \cite{shannon1948,kullback1951,cover2006}; variational inference optimizes distribution-valued objectives \cite{kingma2014auto,blei2017variational}; and generative methods construct transformations between probability laws \cite{sohl2015deep,ho2020denoising,song2021scorebased}. These settings require not only principles for selecting probability states, but also laws for their admissible evolution.

Maximum entropy provides a canonical static selection rule under prescribed information constraints \cite{jaynes1957,cover2006}. A positive target density can be represented in Gibbs form through a suitable information potential, but the static rule does not determine the transition path, probability balance, irreversible production, or open-system information exchange. Bayesian updating likewise specifies a posterior endpoint while leaving its dynamical realization as a separate question \cite{mackay2003,bernardo2009}.

Thermodynamics supplies the missing language of stored energy, flux--force relations, entropy production, and boundary exchange \cite{onsager1931reciprocal,degroot1984non,seifert2012stochastic,esposito2010three}. Related structures occur in GENERIC-type formulations and in free-energy or Wasserstein descriptions of Fokker--Planck dynamics \cite{grmela1997dynamics,ottinger2005gen,jordan1998variational,otto2001geometry,ambrosio2008gradient}. The present work develops a different history-channel construction in which upper-limit variations and explicit port routing separate probability balance from entropy accounting.

The Path-Dependent Entropic Lagrangian (PDEL) \cite{ren2025thermomechanical,ren2026variational} is an energy-valued, history-aware functional containing stored energy, thermal pairing, accumulated channel expenditure, and boundary or information ports. Restricted generators produce the thermal state relation, conservative probability balance, and nonnegative production under standard mobility closure. Maximum entropy is recovered as the stationary closed-system sector: the KL/Shannon module gives MaxEnt and Bayesian distributions as no-flux states, while a time-dependent information potential separates internal dissipation from externally supplied information power.

The framework also permits constitutive substitution. Generalized entropic charts control heavy tails, sparsity, and robustness \cite{tsallis1988,borland1998,naudts2011,basu1998robust}; local structural energies regulate shape \cite{carlen1991,rudin1992nonlinear,chambolle2004,chan2000higher}; and nonlocal terms support clustering and multimodality \cite{carrillo2003asymptotic,carrillo2010particle}. These modules change the model while preserving the same balance--entropy architecture.

The paper proceeds as follows. Section~2 gives a concise thermomechanical origin of the channel rules. Sections~3--5 formulate the probability-path calculus and its KL/Shannon baseline. Section~6 develops composable information and structural modules. Section~7 connects maximum-entropy states, Bayesian updating, and information-driven paths. Section~8 presents two numerical ledger checks immediately before the conclusions.

\section{Thermomechanical review and thermoelastic coupling}\label{sec:thermomechanical-review}

Path-dependent thermomechanical formulations show how one history-aware functional can generate mechanical balance and the entropy/heat equation while assigning a dissipative power to both its resistance and thermal roles \cite{ren2025thermomechanical,ren2026variational}. The following prototype records only the ingredients used later for probability flows.

\subsection{A minimal thermoelastic--damping prototype}\label{sec:thermoelastic-prototype}

Let $\bm u$ be the displacement on $\Omega\subset\mathbb R^d$, with small strain
\begin{equation}\label{eq:thermo-strain}
\bm\varepsilon(\bm u)=\frac12\big(\nabla\bm u+\nabla\bm u^{\mathsf T}\big),
\qquad
\bm\varepsilon^{\mathrm e}=\bm\varepsilon-\bm\alpha(\theta-\theta_0).
\end{equation}
Heat conduction obeys $\bm q=-k\nabla\theta$, $k\ge0$, and the Helmholtz free energy is
\begin{align}
\phi(\bm\varepsilon,\theta)
&=\frac12\bm\varepsilon^{\mathrm e}:\mathbb C:\bm\varepsilon^{\mathrm e}
+\rho c_v\left[(\theta-\theta_0)-\theta\log\left(\frac{\theta}{\theta_0}\right)\right],
\label{eq:thermo-free-energy}\\
\bm\sigma&=\partial_{\bm\varepsilon}\phi=\mathbb C:\bm\varepsilon^{\mathrm e}.
\label{eq:thermo-stress}
\end{align}
For the damping power $D_{\mathrm d}=c_{\mathrm d}|\dot{\bm u}|^2$, $c_{\mathrm d}\ge0$, consider
\begin{equation}\label{eq:thermo-pdel-action}
\mathcal L_{\mathrm{thm}}
=\int_{t_0}^{T}\!\int_{\Omega}
\left[
\frac12\rho|\dot{\bm u}|^2-\phi-\theta s
-\int_{t_0}^{t}\left(\nabla\cdot\bm q-D_{\mathrm d}\right)d\tau
\right]dV\,dt.
\end{equation}

\subsection{Upper-limit channels and coupled equations}\label{sec:thermoelastic-variation}

The history integrand is read only at its current upper limit. The increment $\delta t_1$ denotes the entropy/thermal channel and $\delta t_2$ the mechanical channel; they are two readings of the same path endpoint, not two physical times. Using a single variation symbol,
\begin{align}
\delta\left(\int_{t_0}^{t}\nabla\cdot\bm q\,d\tau\right)
&=\nabla\cdot\bm q\,\delta t_1,
\label{eq:thermo-heat-channel}\\
\delta\left(\int_{t_0}^{t}D_{\mathrm d}\,d\tau\right)
&=D_{\mathrm d}\,\delta t_1-D_{\mathrm d}\,\delta t_2,
\qquad
\delta\bm u=\dot{\bm u}\,\delta t_2.
\label{eq:thermo-spending-rule}
\end{align}
Thus heat divergence is a pure entropy-channel term, while the damping power produces both irreversible heating and the mechanical resistance $c_{\mathrm d}\dot{\bm u}\cdot\delta\bm u$.

After the usual integrations by parts, the variation is
\begin{align}
\delta\mathcal L_{\mathrm{thm}}
=\int_{t_0}^{T}\!\int_{\Omega}
\Big[&-\big(\rho\ddot{\bm u}-\nabla\cdot\bm\sigma+c_{\mathrm d}\dot{\bm u}\big)\cdot\delta\bm u
-\big(\partial_\theta\phi+s\big)\delta\theta \notag\\
&-\big(\theta\dot s+\nabla\cdot\bm q-c_{\mathrm d}|\dot{\bm u}|^2\big)\delta t_1
\Big]dV\,dt.
\label{eq:thermo-total-variation}
\end{align}
Stationarity gives
\begin{align}
\rho\ddot{\bm u}-\nabla\cdot\bm\sigma+c_{\mathrm d}\dot{\bm u}&=0,
\label{eq:thermo-momentum}\\
s&=-\partial_\theta\phi,
\label{eq:thermo-conjugacy}\\
\theta\dot s+\nabla\cdot\bm q&=c_{\mathrm d}|\dot{\bm u}|^2.
\label{eq:thermo-entropy-balance}
\end{align}
For Eq.~\ref{eq:thermo-free-energy},
\begin{equation}\label{eq:thermo-entropy-explicit}
s=\bm\alpha:\bm\sigma+\rho c_v\log\left(\frac{\theta}{\theta_0}\right),
\end{equation}
and the temperature equation becomes
\begin{equation}\label{eq:thermo-temperature-form}
\rho c_v\dot\theta
+\theta\,\bm\alpha:\mathbb C:\big(\dot{\bm\varepsilon}-\bm\alpha\dot\theta\big)
-\nabla\cdot(k\nabla\theta)
=c_{\mathrm d}|\dot{\bm u}|^2.
\end{equation}
The reversible thermoelastic term transfers stored energy between mechanical and thermal fields, whereas damping supplies a nonnegative heat source.

\subsection{Rules transferred to probability flows}\label{sec:thermo-three-rules}

Three features are retained below: thermal conjugacy follows from the $\delta\theta$ coefficient; upper-limit history terms supply their present channel power; and divergence ports are split pointwise before assignment. For a potential $\mu$ and flux $j$,
\begin{equation}\label{eq:thermo-port-split-review}
-\nabla\cdot(\mu j)=-\mu\,\nabla\cdot j-j\cdot\nabla\mu.
\end{equation}
The first term enters the species or probability-balance channel, while the second enters the entropy channel. Under $j=-M\nabla\mu$ with $M\succeq0$,
\begin{equation}\label{eq:thermo-port-production-review}
-j\cdot\nabla\mu=\nabla\mu\cdot M\nabla\mu\ge0.
\end{equation}
This is the direct bridge from thermoelastic channel accounting to the probability-flow construction.

\section{Path-dependent entropic Lagrangian and admissible accounting rules}\label{sec:ledger-admissible}

\subsection{Setting and notation}\label{sec:setting-notation}

Let $X\subset\mathbb{R}^d$ be a fixed state domain with (piecewise) Lipschitz boundary $\partial X$ and outward unit normal $n$. Time is denoted by $t\in[t_0,T]$. We consider a probability density $p(\cdot,t):X\to[0,\infty)$ satisfying the initial normalization
\begin{equation}\label{eq:mass-one-initial}
\int_X p(x,t_0)\,dx=1.
\end{equation}
The probability flux is denoted by $j(\cdot,t):X\to\mathbb{R}^d$, and the balance (continuity) equation is written in conservative form as
\begin{equation}\label{eq:continuity}
\partial_t p+\nabla\cdot j=0\qquad\text{in }X\times(t_0,T].
\end{equation}
A standard boundary closure is the no-flux condition
\begin{equation}\label{eq:noflux}
j\cdot n=0\qquad\text{on }\partial X\times(t_0,T],
\end{equation}
which yields conservation of total probability. Indeed, integrating Eq.~\ref{eq:continuity} over $X$ and applying the divergence theorem gives
\begin{equation}\label{eq:mass-conservation}
\frac{d}{dt}\int_X p(x,t)\,dx=-\int_{\partial X} j\cdot n\,dA,
\end{equation}
hence Eq.~\ref{eq:noflux} implies $\int_X p(x,t)\,dx=\int_X p(x,t_0)\,dx=1$ for all $t\in[t_0,T]$. When $X=\mathbb{R}^d$, the role of Eq.~\ref{eq:noflux} is played by a decay condition ensuring that the boundary integral at infinity vanishes. The conservative representation Eq.~\ref{eq:continuity} and the associated mass conservation are standard in probabilistic transport and Fokker--Planck-type dynamics \cite{ambrosio2008gradient,risken1989fokker}.

A strictly positive reference density (base measure) is denoted by $\pi:X\to(0,\infty)$, which may be taken as a constant (uniform base measure on bounded domains) or as a prescribed prior. We introduce the relative density
\begin{equation}\label{eq:relative-density}
u:=\frac{p}{\pi}.
\end{equation}
Thermal variables are represented by the absolute temperature $\theta(\cdot,t):X\to(0,\infty)$ and an entropy density $s(\cdot,t):X\to\mathbb{R}$. If thermal transport is retained, the heat flux is denoted by $q(\cdot,t):X\to\mathbb{R}^d$. Throughout the paper $q$ denotes the physical conductive heat flux, positive in the direction of heat transport across outward oriented surfaces; the default Fourier closure is
\begin{equation}\label{eq:fourier-q}
q=-k\nabla\theta,\qquad k\succeq 0,
\end{equation}
with $k>0$ in the scalar isotropic case. The heat channel enters the entropy tracking only through $\nabla\cdot q$, consistent with the channel-routing rules specified later.


\subsection{Path-dependent entropic Lagrangian functional and derivation rules}

\begin{definition}[Probability path and entropic Lagrangian]\label{def:probability-path}
A probability path is a curve $t\mapsto p_t\in\mathcal P(X)$ together with the fluxes and ports that account for its transport and exchange. An entropic Lagrangian couples the current stored state to accumulated history terms whose upper-limit contributions are routed into balance and entropy channels. It is path-dependent through this history contribution and entropic through its information potentials, thermal pairing, and irreversible production; physical and free energy retain their usual meanings.
\end{definition}

We work on a fixed state domain $X\subset\mathbb{R}^d$ over $t\in[t_0,T]$. The state fields are $(p(x,t),\theta(x,t),s(x,t))$, with auxiliary port fields specified as needed. The density $p$ is therefore treated not only as an endpoint distribution but as the state variable of a thermodynamically accounted probability path. The starting point of the path-dependent entropic Lagrangian construction is an energy-valued functional
\begin{equation}\label{eq:ledger-functional}
\mathcal A(t)
=
\int_X\big(\phi(p,\theta,\ldots)+\theta s\big)\,dx
+\int_X\int_{t_0}^{t} D(\tau)\,d\tau\,dx
+\mathcal P_{\partial X}(t),
\end{equation}
where $\phi$ is a Helmholtz-type free-energy density and the bulk pairing term $\theta s$ is included as part of the stored energy. The information/chemical potential is defined by
\begin{equation}\label{eq:mu-def-ledger}
\mu:=\partial_p\phi,
\end{equation}
with the understanding that if $\phi$ depends on spatial gradients or nonlocal operators, $\mu$ is interpreted as the corresponding variational derivative with respect to $p$. The history term collects energy-valued channel integrands through $D$, while $\mathcal P_{\partial X}(t)$ denotes explicit boundary-port power terms that are not carried implicitly by divergence channels. The upper-limit rule used below means that, under an admissible terminal-time generator, the history integral contributes its current channel integrand evaluated at $t$, after which each term is routed to its assigned channel. In the basic setting used throughout the paper,
\begin{equation}\label{eq:D-decomposition}
D=\nabla\cdot(\mu j)+\nabla\cdot q-D_{\mathrm{nd}}+D_{\mathrm{aux}},
\end{equation}
where $\nabla\cdot(\mu j)$ is the diffusion power port, $\nabla\cdot q$ is the heat port with the sign convention in Eq.~\ref{eq:fourier-q}, $D_{\mathrm{nd}}\ge 0$ denotes an optional non-divergence production density routed to the entropy channel with the sign shown in Eq.~\ref{eq:D-decomposition}, and $D_{\mathrm{aux}}$ denotes optional source, sink, observation, or boundary-exchange ports. Unless stated otherwise, $D_{\mathrm{nd}}=0$ and $D_{\mathrm{aux}}=0$.

The accounting variation is restricted to an admissible two-generator class, and no free test functions are introduced. Two local scalar generators $\delta t_1(x,t)$ and $\delta t_2(x,t)$ are postulated, representing, respectively, the entropy/irreversible accounting channel and the balance/kinematic accounting channel. Variations of state fields are induced only through these generators, namely
\begin{equation}\label{eq:admissible-generators}
\delta_{t_2}p:=\dot p\,\delta t_2,\qquad
\delta_{t_2}\theta:=\dot\theta\,\delta t_2,\qquad
\delta_{t_1}s:=\dot s\,\delta t_1,
\end{equation}
and no independent $\delta p$, $\delta\theta$, or $\delta s$ are permitted beyond Eq.~\ref{eq:admissible-generators}. The role of $\delta t_2$ is to generate balance-compatible variations consistent with the kinematics, whereas $\delta t_1$ generates entropy accounting; cross-variations and mixed $\delta t_1\delta t_2$ terms are excluded by construction. For a chemical-potential-weighted balance port, we use the induced weighted generator
\begin{equation}\label{eq:weighted-balance-generator}
\delta\eta_\mu:=\mu\,\delta t_2.
\end{equation}
Thus a balance-channel term of the form $\mu B\,\delta t_2$ is read equivalently as $B\,\delta\eta_\mu$. The family $\delta\eta_\mu$ is the effective balance test family associated with the port; where a pointwise degeneracy of $\mu$ is present, the balance closure is understood through smooth localization or regularization of this weighted test family.
 
Divergence channels are treated by explicit port-routing axioms that split each divergence port into a balance-channel closure in $\delta t_2$ and an entropy-channel contribution in $\delta t_1$. The diffusion divergence port is routed by the following axiom, stated as part of the calculus:
\begin{equation}\label{eq:diffusion-routing}
\delta \left[\int_X\int_{t_0}^{t}\nabla\cdot(\mu j)\,d\tau\,dx\right]
:=
\int_X\left( j\cdot\nabla\mu\,\delta t_1
+
\mu\,\nabla\cdot j\, \delta t_2\right)\,dx.
\end{equation}
The intended interpretation is that $j\cdot\nabla\mu$ contributes to entropy accounting, whereas $\mu\,\nabla\cdot j$ contributes to balance closure. In the weighted notation Eq.~\ref{eq:weighted-balance-generator}, the balance contribution is $\int_X (\nabla\cdot j)\,\delta\eta_\mu\,dx$. Heat transport is a pure entropy channel: the heat divergence port $\nabla\cdot q$ is routed exclusively into the $\delta t_1$ channel, with zero $\delta t_2$ contribution, so that it pairs directly with the $\delta t_1$-variation of the bulk term $\theta s$. With the convention $q=-k\nabla\theta$, the local entropy-channel equation is written as $\theta\dot s+\nabla\cdot q=\Xi$, equivalently $\theta\dot s=-\nabla\cdot q+\Xi$. Here $\Xi$ denotes the entropy-production term in the energy-valued accounting convention of the entropic Lagrangian. 

After applying the admissible generators Eq.~\ref{eq:admissible-generators}, and the routing axioms such as Eq.~\ref{eq:diffusion-routing}, the total accounting variation is collected in the canonical form
\begin{equation}\label{eq:canonical-variation}
\delta\mathcal A
=
\int_X R_1\,\delta t_1\,dx
+
\int_X R_2\,\delta t_2\,dx
+
\delta(\text{boundary ports}),
\end{equation}
where $\delta(\text{boundary ports})$ collects the variations of explicit boundary power terms from $\mathcal P_{\partial X}(t)$ and any boundary contributions induced by divergence channels when appropriate. The admissible coefficient-vanishing rule then imposes
\begin{equation}\label{eq:coeff-vanishing}
R_1=0,\qquad R_2=0,
\end{equation}
for arbitrary admissible generators $\delta t_1$ and $\delta t_2$, or equivalently for the induced weighted balance generator $\delta\eta_\mu$ on a $\mu$-weighted balance port. This yields, in subsequent sections, the variationally induced state relation $s=-\partial_\theta\phi$, the balance closure including the continuity equation for $p$, and the entropy equation with explicitly identified entropy-production terms. The present construction is an admissible-variation formalism for open irreversible systems with explicit port accounting and routing, rather than a free G\^ateaux variation or a reduction to action stationarity \cite{onsager1931reciprocal,degroot1984non}.

\section{State relation, continuity, and entropy accounting}\label{sec:ledger-induced}

\begin{proposition}\label{prop:ledger-induced}
Let $X\subset\mathbb{R}^d$ be fixed and let the bulk stored-energy density be $\phi(p,\theta,\ldots)+\theta s$, with $\mu:=\partial_p\phi$. Assume the admissible accounting generators satisfy $\delta_{t_2}p=\dot p\,\delta t_2$, $\delta_{t_2}\theta=\dot\theta\,\delta t_2$, and $\delta_{t_1}s=\dot s\,\delta t_1$, with no independent variations beyond these generators. Assume the diffusion divergence port is routed by
\begin{equation}\label{eq:diff-routing-sec4}
\delta \left[\int_X\int_{t_0}^{t}\nabla\cdot(\mu j)\,d\tau\,dx\right]
:=
\int_X\left( j\cdot\nabla\mu\,\delta t_1
+
 \mu\,\nabla\cdot j\,\delta t_2 \right)\,dx,
\end{equation}
and the heat divergence port is routed exclusively into the entropy channel, i.e.,
\begin{equation}\label{eq:heat-routing-sec4}
\delta \left[\int_X\int_{t_0}^{t}\nabla\cdot q\,d\tau\,dx\right]
:=\int_X \nabla\cdot q\,\delta t_1\,dx.
\end{equation}
Optional non-divergence production densities are routed by
\begin{equation}\label{eq:dnd-routing-sec4}
\delta \left[-\int_X\int_{t_0}^{t}D_{\mathrm{nd}}\,d\tau\,dx\right]
:=-\int_X D_{\mathrm{nd}}\,\delta t_1\,dx,\qquad D_{\mathrm{nd}}\ge 0.
\end{equation}
In the purely divergence-routed case, one sets $D_{\mathrm{nd}}=0$.
Then the coefficient-vanishing rule yields the state relation
\begin{equation}\label{eq:state-pairing}
{\ s=-\partial_\theta\phi\ },
\end{equation}
the balance closure containing the continuity equation
\begin{equation}\label{eq:continuity-sec4}
{\ \dot p+\nabla\cdot j=0\ },
\end{equation}
and the entropy equation in the form
\begin{equation}\label{eq:entropy-ledger-sec4}
{\ \theta\dot s+\nabla\cdot q=-\,j\cdot\nabla\mu+D_{\mathrm{nd}}\ }.
\end{equation}
Moreover, under the minimal constitutive closure
\begin{equation}\label{eq:onsager-closure-sec4}
{\ j=-M\nabla\mu,\qquad M\succeq 0\ },
\end{equation}
the entropy production density satisfies
\begin{equation}\label{eq:entropy-production-sec4}
{\ \Xi=\nabla\mu\cdot M\nabla\mu+D_{\mathrm{nd}}\ge 0\ }.
\end{equation}
\end{proposition}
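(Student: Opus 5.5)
The plan is to compute $\delta\mathcal A$ term by term under the admissible generators Eq.~\ref{eq:admissible-generators} and the routing axioms Eqs.~\ref{eq:diff-routing-sec4}--\ref{eq:dnd-routing-sec4}. The result is then collected into the canonical form Eq.~\ref{eq:canonical-variation}, after which the coefficient-vanishing rule Eq.~\ref{eq:coeff-vanishing} is applied channel by channel. Throughout, $D_{\mathrm{aux}}=0$ and $\mathcal P_{\partial X}$ contributes only boundary terms; these are handled by the no-flux closure Eq.~\ref{eq:noflux}.

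\textbf{Stored-energy term.} Vary $\int_X(\phi+\theta s)\,dx$ by the chain rule. The $\delta t_2$ generator acts on $p$ and $\theta$, giving
\[
\int_X\big(\mu\,\dot p+(\partial_\theta\phi+s)\,\dot\theta\big)\,\delta t_2\,dx .
\]
The $\delta t_1$ generator acts only on $s$, giving
\[
\int_X\theta\,\dot s\,\delta t_1\,dx .
\]
If $\phi$ contains gradient or nonlocal parts, $\mu$ is the variational derivative, and the integrations by parts produce boundary terms; these are placed in $\delta(\text{boundary ports})$.

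\textbf{History term.} By the axioms, the history term contributes $j\cdot\nabla\mu+\nabla\cdot q-D_{\mathrm{nd}}$ to the $\delta t_1$ channel and $\mu\,\nabla\cdot j$ to the $\delta t_2$ channel. Collecting,
\[
R_1=\theta\dot s+\nabla\cdot q+j\cdot\nabla\mu-D_{\mathrm{nd}},\qquad
R_2=\mu(\dot p+\nabla\cdot j)+(\partial_\theta\phi+s)\dot\theta .
\]
Here I am assuming the sign convention of the thermoelastic prototype, where the $\theta s$ pairing and the history integral enter with consistent signs. I would fix the overall sign so that it agrees with Eq.~\ref{eq:thermo-total-variation}.

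\textbf{Extracting the three relations.} $R_1=0$ is exactly Eq.~\ref{eq:entropy-ledger-sec4}. The main obstacle is splitting $R_2=0$ into two separate equations, because only the single scalar generator $\delta t_2$ is available. I would use the weighted generator $\delta\eta_\mu=\mu\,\delta t_2$ of Eq.~\ref{eq:weighted-balance-generator} and treat the $\mu$-weighted balance port and the $\dot\theta$-weighted thermal pairing as independent admissible directions. This is the same reading by which the $\delta\theta$ coefficient is separated in Section~\ref{sec:thermoelastic-variation}. The $\mu$-weighted part gives $\dot p+\nabla\cdot j=0$. Where $\mu$ degenerates, this is understood through the regularized test family described after Eq.~\ref{eq:weighted-balance-generator}. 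The $\dot\theta$-weighted part gives $s=-\partial_\theta\phi$; wherever $\dot\theta=0$ pointwise, I would argue by continuity or by the same localization. I would make this independence explicit, since it is the one step not already forced by the axioms.

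\textbf{Production under the closure.} Finally, substitute Eq.~\ref{eq:onsager-closure-sec4}. Then $-j\cdot\nabla\mu=\nabla\mu\cdot M\nabla\mu\ge0$ by positive semidefiniteness of $M$, exactly as in Eq.~\ref{eq:thermo-port-production-review}. Since $D_{\mathrm{nd}}\ge0$, it follows that $\Xi=\nabla\mu\cdot M\nabla\mu+D_{\mathrm{nd}}\ge0$.
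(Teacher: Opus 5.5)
Your proposal is correct and follows the paper's proof essentially step by step: the same $\delta t_2$/$\delta t_1$ split of the stored energy, the same collection of routed port terms into the entropy-channel coefficient $\theta\dot s+\nabla\cdot q+j\cdot\nabla\mu-D_{\mathrm{nd}}$, and the same device for the balance channel. That device reads the $\delta t_2$ coefficient through two separately admissible tests, $\delta\theta:=\delta_{t_2}\theta$ for $s=-\partial_\theta\phi$ and $\delta\eta_\mu=\mu\,\delta t_2$ for $\dot p+\nabla\cdot j=0$. You are right that this independence is not forced by the single generator $\delta t_2$; the paper likewise asserts it rather than deriving it, so your argument is missing nothing relative to the paper.
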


\begin{proof}
By admissible balance accounting, the bulk variation of the stored energy reads
\begin{equation}\label{eq:bulk-var-sec4}
\delta_{t_2}\int_X\big(\phi(p,\theta,\ldots)+\theta s\big)\,dx
=
\int_X\Big(\partial_p\phi\,\dot p+\partial_\theta\phi\,\dot\theta+s\,\dot\theta\Big)\delta t_2\,dx
=
\int_X\Big(\mu\,\dot p+(\partial_\theta\phi+s)\dot\theta\Big)\delta t_2\,dx,
\end{equation}
since $\delta_{t_2}s=0$ and $\delta_{t_2}\theta=\dot\theta\,\delta t_2$. The latter identity is precisely the admissible thermal variation, so the thermal part may be read as $(\partial_\theta\phi+s)\,\delta\theta$ with $\delta\theta:=\delta_{t_2}\theta$. Coefficient vanishing with respect to this admissible $\delta\theta$ yields Eq.~\ref{eq:state-pairing}. 

The diffusion divergence port is treated by the routing axiom Eq.~\ref{eq:diff-routing-sec4}. Its balance-channel contribution is $\int_X \mu\,\nabla\cdot j\,\delta t_2\,dx$, which combines with the term $\int_X \mu\,\dot p\,\delta t_2\,dx$ from Eq.~\ref{eq:bulk-var-sec4} to form
\begin{equation}\label{eq:mu-balance-sec4}
\int_X \mu\big(\dot p+\nabla\cdot j\big)\,\delta t_2\,dx
=
\int_X \big(\dot p+\nabla\cdot j\big)\,\delta\eta_\mu\,dx.
\end{equation}
Here $\delta\eta_\mu:=\mu\delta t_2$ is the effective $\mu$-weighted balance generator introduced in Eq.~\ref{eq:weighted-balance-generator}. Coefficient vanishing is therefore applied to the weighted balance test in Eq.~\ref{eq:mu-balance-sec4}; by localization, and by regularization on possible zero sets of $\mu$ when needed, this gives the weak balance closure $\dot p+\nabla\cdot j=0$, namely Eq.~\ref{eq:continuity-sec4}. 

The entropy-channel variation of the bulk term arises from $\delta_{t_1}s=\dot s\,\delta t_1$, giving
\begin{equation}\label{eq:bulk-entropy-sec4}
\delta_{t_1}\int_X\big(\phi(p,\theta,\ldots)+\theta s\big)\,dx=\int_X \theta\,\dot s\,\delta t_1\,dx.
\end{equation}
By the heat routing rule Eq.~\ref{eq:heat-routing-sec4}, the heat divergence port contributes $\int_X \nabla\cdot q\,\delta t_1\,dx$. By the diffusion routing axiom Eq.~\ref{eq:diff-routing-sec4}, the entropy-channel contribution of the diffusion port is $\int_X j\cdot\nabla\mu\,\delta t_1\,dx$. The optional non-divergence production channel Eq.~\ref{eq:dnd-routing-sec4} contributes $-\int_X D_{\mathrm{nd}}\,\delta t_1\,dx$. Collecting the $\delta t_1$-terms and imposing coefficient vanishing yields
\begin{equation}\label{eq:R1-zero-sec4}
\theta\dot s+\nabla\cdot q+j\cdot\nabla\mu-D_{\mathrm{nd}}=0,
\end{equation}
which is equivalent to Eq.~\ref{eq:entropy-ledger-sec4}. 

Under the closure Eq.~\ref{eq:onsager-closure-sec4}, one has $-j\cdot\nabla\mu=\nabla\mu\cdot M\nabla\mu\ge 0$, hence Eq.~\ref{eq:entropy-production-sec4} follows with $\Xi:=\nabla\mu\cdot M\nabla\mu+D_{\mathrm{nd}}$. Such mobility closures are standard in nonequilibrium thermodynamics and dissipative evolution formulations \cite{onsager1931reciprocal,degroot1984non,jordan1998variational}.
\end{proof}

\section{KL/Shannon baseline example}\label{sec:kl-baseline}

We specify a baseline entropic free-energy contribution by fixing a reference density (base measure) $\pi(x)>0$ and defining
\begin{equation}\label{eq:phi-ent-kl}
\phi_{\mathrm{ent}}(p,\theta)=\theta\,p\log\frac{p}{\pi}.
\end{equation}
The associated entropic potential is obtained by differentiation with respect to $p$,
\begin{equation}\label{eq:mu-ent-kl}
\mu_{\mathrm{ent}}=\partial_p\phi_{\mathrm{ent}}
=\theta\Big(1+\log\frac{p}{\pi}\Big).
\end{equation}
The integral $\int_X p\log(p/\pi)\,dx$ is the relative entropy (Kullback--Leibler divergence) density functional, and its discrete counterpart reduces to Shannon's entropy up to the choice of base measure and units \cite{shannon1948,kullback1951}.

A common choice for the full free-energy density is an additive decomposition $\phi=\phi_0+\phi_{\mathrm{ent}}$, where $\phi_0$ encodes energetic preferences beyond entropic effects. For instance, an external potential $U(x)$ can be represented by $\phi_0(p)=p\,U(x)$, which yields $\partial_p\phi_0=U$ and hence
\begin{equation}\label{eq:mu-total-kl}
\mu=\partial_p\phi
=
U+\theta\Big(1+\log\frac{p}{\pi}\Big).
\end{equation}
Inserting Eq.~\ref{eq:mu-total-kl} into the balance closure $\partial_t p+\nabla\cdot j=0$ derived from the functional, one obtains a probabilistic flow once a mobility is specified. We adopt the Onsager-type constitutive closure
\begin{equation}\label{eq:mobility-kl}
j=-\mathbb{M}(p)\,\nabla\mu,
\qquad
\mathbb{M}(p)\succeq 0,
\end{equation}
which is consistent with nonnegative entropy production since $-j\cdot\nabla\mu=\nabla\mu\cdot \mathbb{M}(p)\nabla\mu\ge 0$. A standard probabilistic mobility is $\mathbb{M}(p)=p\,M$ with a constant positive semidefinite matrix $M\succeq 0$, which yields the Fokker--Planck-type evolution
\begin{equation}\label{eq:fp-kl}
\partial_t p
=
\nabla\cdot\Big(p\,M\nabla\mu\Big)
=
\nabla\cdot\Big(p\,M\nabla U+\theta\,p\,M\nabla\log\frac{p}{\pi}\Big),
\end{equation}
posed together with a no-flux boundary condition $j\cdot n=0$ on $\partial X$ to ensure conservation of total probability. The structure Eq.~\ref{eq:fp-kl} is a classical drift--diffusion form in probabilistic dynamics, and its relation to free-energy dissipation has been extensively studied \cite{risken1989fokker,jordan1998variational}.

Stationary states under no-flux boundary conditions satisfy $j=0$ and therefore $\nabla\mu=0$ in $X$. In the isothermal case with spatially uniform $\theta$, Eq.~\ref{eq:mu-total-kl} implies $\mu=\mu_\ast$ for a constant $\mu_\ast$, so that
\begin{equation}\label{eq:equilibrium-kl}
p(x)
=
Z^{-1}\,\pi(x)\,e^{-U(x)/\theta},
\qquad
Z:=\int_X \pi(x)\,e^{-U(x)/\theta}\,dx,
\end{equation}
provided $Z<\infty$. This equilibrium is the Gibbs form relative to the base measure $\pi$, and it reduces to the standard canonical density when $\pi$ is uniform.

With $D_{\mathrm{nd}}=0$, the entropy expression derived from the admissible calculus takes the form $\theta\dot s+\nabla\cdot q=\Xi$, with entropy production density $\Xi=-j\cdot\nabla\mu$. Under Eq.~\ref{eq:mobility-kl}, this becomes
\begin{equation}\label{eq:xi-kl}
\Xi=\nabla\mu\cdot\mathbb{M}(p)\nabla\mu\ge 0,
\end{equation}
which makes the dissipative character of the KL baseline explicit through a quadratic form in $\nabla\mu$. In this sense, the KL/Shannon choice Eq.~\ref{eq:phi-ent-kl} constitutes a baseline instance within the same structure, recovering a classical drift--diffusion setting while preserving the balance--entropy routing and the structural nonnegativity of entropy production \cite{degroot1984non,onsager1931reciprocal}.

\begin{remark}\label{rem:gauge-kl}
The additive term $\theta$ in Eq.~\ref{eq:mu-ent-kl} is a gauge in the isothermal setting, since only $\nabla\mu$ enters Eq.~\ref{eq:mobility-kl}. Equivalently, one may replace Eq.~\ref{eq:phi-ent-kl} by $\theta\,p(\log(p/\pi)-1)$, which yields $\mu_{\mathrm{ent}}=\theta\log(p/\pi)$ and leaves Eq.~\ref{eq:fp-kl} unchanged when $\theta$ is spatially uniform.
\end{remark}

\section{Applications}

\subsection{Application I: heavy tails and sparsity}\label{sec:app-heavy-tail}

The KL/Shannon choice $\mu_{\mathrm{ent}}=\theta\big(1+\log(p/\pi)\big)$ represents a single point within a broader class of admissible entropic potentials that preserve the balance--entropy routing and the structural nonnegativity of entropy production. In this section we introduce a minimal family of composable entropic potentials designed to adjust tail penalization and sparsity while keeping the same variational calculus and port accounting. Let $\pi(x)>0$ be a fixed reference density and define the relative density $u:=p/\pi$. We prescribe an entropic potential in the form
\begin{equation}\label{eq:mu-ent-general}
\mu_{\mathrm{ent}}=\theta\,\psi(u),
\end{equation}
where $\psi:(0,\infty)\to\mathbb{R}$ is a chosen scalar potential. Two representative choices are the power potential
\begin{equation}\label{eq:psi-alpha}
\psi_\alpha(u)=\frac{u^\alpha-1}{\alpha},\qquad \alpha\neq 0,
\end{equation}
and the $q$-logarithmic (Tsallis) potential
\begin{equation}\label{eq:psi-q}
\psi_q(u)=\ln_q u:=\frac{u^{1-q}-1}{1-q},\qquad q\neq 1,
\end{equation}
which is recovered from Eq.~\ref{eq:psi-alpha} by the identification $\alpha=1-q$. The limit $q\to 1$ yields $\psi_q(u)\to \log u$ and hence recovers the KL/Shannon baseline \cite{tsallis1988,naudts2011}. The adjective ``composable'' refers to the fact that $\psi_q$ satisfies a generalized additivity under multiplicative composition of relative densities, namely
\begin{equation}\label{eq:q-composition}
\psi_q(u_1u_2)=\psi_q(u_1)+\psi_q(u_2)+(1-q)\psi_q(u_1)\psi_q(u_2),
\end{equation}
which reduces to ordinary additivity in the limit $q\to 1$.

To embed Eq.~\ref{eq:mu-ent-general} into the framework, we define $\phi_{\mathrm{ent}}$ by integrating $\mu_{\mathrm{ent}}=\partial_p\phi_{\mathrm{ent}}$ with respect to $p$ at fixed $\theta$ and $\pi$. Writing $p=\pi u$ and $dp=\pi\,du$, one obtains
\begin{equation}\label{eq:phi-ent-general}
\phi_{\mathrm{ent}}(p,\theta)=\theta\int_{0}^{p}\psi \left(\frac{z}{\pi}\right)\,dz
=\theta\,\pi\int_{0}^{u}\psi(\xi)\,d\xi,
\end{equation}
up to an additive gauge depending only on $\theta$, which does not affect $\nabla\mu$ in isothermal settings. For the power potential Eq.~\ref{eq:psi-alpha}, Eq.~\ref{eq:phi-ent-general} yields the explicit density
\begin{equation}\label{eq:phi-ent-alpha}
\phi_{\mathrm{ent}}^{(\alpha)}(p,\theta)
=\theta\,\pi\left(\frac{u^{\alpha+1}}{\alpha(\alpha+1)}-\frac{u}{\alpha}\right),
\qquad u=\frac{p}{\pi},
\end{equation}
provided $\alpha>-1$ so that the primitive is finite at $u=0$. For the $q$-logarithmic potential Eq.~\ref{eq:psi-q}, one obtains analogously
\begin{equation}\label{eq:phi-ent-q}
\phi_{\mathrm{ent}}^{(q)}(p,\theta)
=\theta\,\pi\left(\frac{u^{2-q}}{(1-q)(2-q)}-\frac{u}{1-q}\right),
\qquad u=\frac{p}{\pi},
\end{equation}
for $q\neq 1,2$, with the limiting cases recovering the KL/Shannon form as $q\to 1$ \cite{naudts2011}.

With a decomposition $\phi=\phi_0+\phi_{\mathrm{ent}}$, the total potential reads
\begin{equation}\label{eq:mu-total-heavy}
\mu=\partial_p\phi=\partial_p\phi_0+\theta\,\psi(u),
\qquad u=\frac{p}{\pi}.
\end{equation}
With $D_{\mathrm{nd}}=0$, the admissible calculus yields the continuity equation $\partial_t p+\nabla\cdot j=0$ and the entropy equation $\theta\dot s+\nabla\cdot q=-j\cdot\nabla\mu$. Under the minimal mobility closure
\begin{equation}\label{eq:mobility-heavy}
j=-\mathbb{M}(p)\nabla\mu,
\qquad \mathbb{M}(p)\succeq 0,
\end{equation}
the induced probability flow takes the generic form
\begin{equation}\label{eq:pde-heavy}
\partial_t p=\nabla\cdot\big(\mathbb{M}(p)\nabla\mu\big),
\end{equation}
and the entropy production is $\Xi=\nabla\mu\cdot\mathbb{M}(p)\nabla\mu+D_{\mathrm{nd}}\ge 0$ by construction. If $\phi_0(p)=p\,U(x)$ for an external potential $U$, then $\partial_p\phi_0=U$ and stationary no-flux states satisfy $\nabla\mu=0$, hence
\begin{equation}\label{eq:stationary-inversion}
U(x)+\theta\,\psi \left(\frac{p(x)}{\pi(x)}\right)=\mu_\ast,
\end{equation}
for a constant $\mu_\ast$, which can be rewritten as
\begin{equation}\label{eq:stationary-solution-general}
p(x)=\pi(x)\,\psi^{-1}\left(\frac{\mu_\ast-U(x)}{\theta}\right),
\end{equation}
whenever $\psi$ is invertible on the relevant range. If $\psi$ has a finite lower endpoint, the inverse is understood on its active range and the density is set to zero when the argument falls below that endpoint, producing a finite cutoff. For the $q$-logarithmic choice Eq.~\ref{eq:psi-q}, Eq.~\ref{eq:stationary-solution-general} becomes the $q$-exponential form
\begin{equation}\label{eq:q-exponential-stationary}
p(x)=\pi(x)\left[1+(1-q)\frac{\mu_\ast-U(x)}{\theta}\right]_+^{\frac{1}{1-q}},
\end{equation}
where $[\cdot]_+$ denotes the positive part. The tail and sparsity properties follow directly from the asymptotic behavior of $\psi$ and $\psi^{-1}$. When $q>1$, the bracket in Eq.~\ref{eq:q-exponential-stationary} remains positive for large $U$ and the decay is algebraic, yielding heavy-tailed stationary states; when $q<1$, the positive-part constraint enforces compact support, leading to sparse (truncated) stationary densities. These regimes are well documented in nonextensive thermostatistics and in the associated nonlinear Fokker--Planck dynamics \cite{tsallis1988,borland1998,naudts2011}.

A concise criterion can be stated at the level of growth rates. Suppose $U(x)\to\infty$ as $|x|\to\infty$ and $\pi$ is slowly varying. The far-field regime in Eq.~\ref{eq:stationary-solution-general} corresponds to $(\mu_\ast-U(x))/\theta\to-\infty$ and therefore probes $\psi^{-1}(y)$ as $y\to-\infty$, equivalently the behavior of $\psi(u)$ as $u\to 0^+$. If $\psi(u)\sim\log u$ as $u\to0^+$, the Gibbs tail is recovered. If $|\psi(u)|/|\log u|\to\infty$ as $u\to0^+$, then $\psi^{-1}(y)$ decays more slowly than $e^y$ as $y\to-\infty$ and heavier-than-Gibbs tails result. If $|\psi(u)|/|\log u|\to0$, the inverse decays more rapidly and lighter tails result. If $\psi$ has a finite lower bound as $u\to0^+$, the inverse terminates at the lower endpoint of its active range and Eq.~\ref{eq:stationary-solution-general} yields compact support or sparse truncated profiles. The large-density behavior $u\to\infty$ instead controls the response to concentration and extreme upper ratios, hence it is more directly tied to robustness of transient driving forces than to the far-field tail under a confining $U$. In all cases, the modification of tail and sparsity behavior is achieved without altering the structure, the balance closure, or the nonnegativity of entropy production ensured by Eq.~\ref{eq:mobility-heavy}.

\subsection{Application II: bounded driving forces and robust potentials}\label{sec:app-robust}

The KL/Shannon baseline corresponds to an entropic potential of logarithmic growth, $\mu_{\mathrm{ent}}\sim \theta\log(p/\pi)$, and therefore yields an unbounded driving force as $p/\pi\to 0$ or $p/\pi\to\infty$. In applications where extreme density ratios arise, such as stiff transient regimes, truncated computational domains, or data-driven probability flows, it is often desirable to enforce boundedness of the entropic driving force while preserving the balance--entropy routing, the energy-valued accounting, and the structural nonnegativity of entropy production. This can be achieved by replacing the logarithmic potential by saturating (robust) composable potentials of the form $\mu_{\mathrm{ent}}=\theta\,\psi(u)$ with $u=p/\pi$, where $\psi$ is bounded and monotone.

Representative saturating choices include
\begin{equation}\label{eq:psi-saturating}
\psi_{\tanh}(u)=\frac{1}{\alpha}\tanh \big(\alpha\log u\big),\,\,
\psi_{\arctan}(u)=\frac{1}{\alpha}\arctan \big(\alpha\log u\big),\,\,
\psi_{\mathrm{ss}}(u)=\frac{\log u}{\sqrt{1+\alpha^2(\log u)^2}},
\end{equation}
with parameter $\alpha>0$. Each function in Eq.~\ref{eq:psi-saturating} satisfies $\psi(1)=0$ and is odd in $\log u$, hence it treats $u\to 0$ and $u\to\infty$ symmetrically in the logarithmic scale. Moreover, $\psi_{\tanh}$ and $\psi_{\arctan}$ are globally bounded, while $\psi_{\mathrm{ss}}$ is bounded with $\lim_{|\log u|\to\infty}\psi_{\mathrm{ss}}(u)=1/\alpha$. The corresponding entropic free-energy contribution can be constructed by integrating $\mu_{\mathrm{ent}}=\partial_p\phi_{\mathrm{ent}}$ as in Eq.~\ref{eq:phi-ent-general}, namely
\begin{equation}\label{eq:phi-ent-robust}
\phi_{\mathrm{ent}}(p,\theta)=\theta\int_{0}^{p}\psi \left(\frac{z}{\pi}\right)\,dz
=\theta\,\pi\int_{0}^{u}\psi(\xi)\,d\xi,
\qquad u=\frac{p}{\pi},
\end{equation}
up to an additive gauge depending only on $\theta$. The admissible variational calculus remains: it yields the continuity equation $\partial_t p+\nabla\cdot j=0$ and the entropy equation $\theta\dot s+\nabla\cdot q=-j\cdot\nabla\mu+D_{\mathrm{nd}}$, while the mobility closure $j=-\mathbb{M}(p)\nabla\mu$ with $\mathbb{M}(p)\succeq 0$ guarantees $\Xi=\nabla\mu\cdot\mathbb{M}(p)\nabla\mu+D_{\mathrm{nd}}\ge 0$.

\begin{lemma}\label{lem:bounded-driving}
Let $\mu_{\mathrm{ent}}=\theta\,\psi(u)$ with $u=p/\pi$, and assume $\theta$ is bounded on $X\times[t_0,T]$ and $\pi$ is strictly positive with bounded logarithmic gradient $\nabla\log\pi\in L^\infty(X)$. If $\psi$ is bounded on $(0,\infty)$, then $\mu_{\mathrm{ent}}$ is uniformly bounded whenever $\theta$ is. If, in addition, the logarithmic-scale derivative $u\psi'(u)$ is bounded on $(0,\infty)$ and $p$ is strictly positive, then $\nabla\mu_{\mathrm{ent}}$ admits the bound
\begin{equation}\label{eq:grad-bound}
|\nabla\mu_{\mathrm{ent}}|
\le
|\nabla\theta|\,|\psi(u)|
+
\|\theta\|_{L^\infty}\,\|u\psi'(u)\|_{L^\infty}\,|\nabla\log u|,
\qquad
\nabla\log u=\nabla\log p-\nabla\log\pi.
\end{equation}
Under the mobility closure $j=-\mathbb{M}(p)\nabla\mu$ with $\mathbb{M}(p)\succeq 0$, the entropic dissipation contribution satisfies
\begin{equation}\label{eq:xi-bound}
\nabla\mu\cdot\mathbb{M}(p)\nabla\mu \le \|\mathbb{M}(p)\|\,|\nabla\mu|^2,
\end{equation}
and therefore does not diverge due to unbounded growth of the entropic potential itself in the extreme-ratio regimes $u\to 0$ or $u\to\infty$ when $\psi$ and $u\psi'(u)$ are bounded.
\end{lemma}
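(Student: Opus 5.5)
The argument is a direct pointwise estimate and needs no result beyond the constitutive definition $\mu_{\mathrm{ent}}=\theta\,\psi(u)$ and the mobility closure already adopted in Eq.~\ref{eq:mobility-heavy}. I will treat the three claims in the order stated.

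\emph{Step 1: uniform bound.} From $|\mu_{\mathrm{ent}}|=|\theta|\,|\psi(u)|$ we get $|\mu_{\mathrm{ent}}|\le\|\theta\|_{L^\infty}\sup_{(0,\infty)}|\psi|$ at every point of $X\times[t_0,T]$. This uses only that $\psi$ is bounded; nothing is required of $p$, since $u\in(0,\infty)$ wherever $\mu_{\mathrm{ent}}$ is defined.

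\emph{Step 2: gradient bound.} Assume in addition that $p>0$, and that $p$, $\theta$ and $\psi$ are differentiable (for $\psi$ this is implicit in requiring $u\psi'(u)$ bounded). Then $u=p/\pi>0$ and $\log u=\log p-\log\pi$ is well defined and differentiable, with $\nabla\log u=\nabla\log p-\nabla\log\pi$. The assumption $\nabla\log\pi\in L^\infty$ makes this second piece harmless. The product and chain rules give
\[
\nabla\mu_{\mathrm{ent}}=\psi(u)\nabla\theta+\theta\,\psi'(u)\nabla u .
\]
The key step is to rewrite $\nabla u=u\,\nabla\log u$, so the second term becomes $\theta\,\big(u\psi'(u)\big)\nabla\log u$. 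This trades the possibly singular $\psi'(u)$, which for the saturating choices in Eq.~\ref{eq:psi-saturating} behaves like $1/u$ near $u=0$, for the bounded logarithmic-scale derivative $u\psi'(u)$. Applying the triangle inequality and bounding $|\theta|\le\|\theta\|_{L^\infty}$ and $|u\psi'(u)|\le\|u\psi'\|_{L^\infty}$ yields Eq.~\ref{eq:grad-bound}.

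\emph{Step 3: dissipation bound.} Since $\mathbb M(p)\succeq0$, Cauchy--Schwarz together with the operator norm gives
\[
0\le\nabla\mu\cdot\mathbb M\nabla\mu\le|\nabla\mu|\,|\mathbb M\nabla\mu|\le\|\mathbb M(p)\|\,|\nabla\mu|^2 ,
\]
which is Eq.~\ref{eq:xi-bound}. For the non-divergence conclusion, split the total potential as in Eq.~\ref{eq:mu-total-heavy}, $\nabla\mu=\nabla\partial_p\phi_0+\nabla\mu_{\mathrm{ent}}$. By Step 2, the entropic part is controlled by $|\nabla\theta|$ and $|\nabla\log p|+\|\nabla\log\pi\|_{L^\infty}$, with constants that do not depend on how large or small $u$ is. The extreme-ratio limits $u\to0$ and $u\to\infty$ therefore create no extra growth through $\psi$ or $\psi'$. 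This is in contrast with the KL case, where $\mu_{\mathrm{ent}}=\theta(1+\log u)$ is itself unbounded in both limits.

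\emph{Main obstacle.} The computations are routine; the difficulty is stating the last conclusion correctly. The bound still involves $|\nabla\log p|$ and $\|\mathbb M(p)\|$, so $\Xi$ can still grow, for instance through steep logarithmic gradients or through a mobility like $\mathbb M=pM$. I would therefore phrase the claim as the paper does: divergence does not come \emph{from unbounded growth of the entropic potential itself}. I would also note explicitly that the strict positivity of $p$ is needed precisely so that $\nabla\log u$ makes sense.
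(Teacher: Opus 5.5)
Your proof is correct and follows the paper's argument: you bound $\theta\psi(u)$ pointwise, differentiate and rewrite $\theta\psi'(u)\nabla u=\theta\,u\psi'(u)\,\nabla\log u$ before applying the triangle inequality, and obtain Eq.~\ref{eq:xi-bound} from Cauchy--Schwarz with the operator norm of $\mathbb M(p)\succeq 0$. Your side remark that $\psi'(u)$ behaves like $1/u$ near $u=0$ for the saturating charts is loose: for $\psi_{\tanh}$ one has $\psi'(u)\sim 4u^{2\alpha-1}$, and for $\psi_{\arctan}$ and $\psi_{\mathrm{ss}}$ the factor $1/u$ comes with extra logarithmic decay. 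This plays no role in the estimate.
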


\begin{proof}
The boundedness of $\mu_{\mathrm{ent}}$ follows directly from $\mu_{\mathrm{ent}}=\theta\,\psi(u)$ and $\|\psi\|_{L^\infty}<\infty$. Differentiating $\mu_{\mathrm{ent}}$ yields
\[
\nabla\mu_{\mathrm{ent}}=\psi(u)\nabla\theta+\theta\,\psi'(u)\nabla u
=\psi(u)\nabla\theta+\theta\,u\psi'(u)\nabla\log u.
\]
This gives Eq.~\ref{eq:grad-bound}. The estimate Eq.~\ref{eq:xi-bound} follows from the positive semidefiniteness of $\mathbb{M}(p)$.
\end{proof}

The contrast with the KL baseline is immediate. For $\psi(u)=\log u$, $\psi$ is unbounded on $(0,\infty)$ and the entropic potential $\mu_{\mathrm{ent}}$ diverges as $u\to 0$ or $u\to\infty$. In the robust choices Eq.~\ref{eq:psi-saturating}, $\psi$ is globally bounded and the logarithmic-scale derivative $u\psi'(u)$ is bounded, typically decaying to zero as $|\log u|\to\infty$. Thus extreme relative densities do not generate arbitrarily large entropic driving forces through growth of the potential scale. This modification changes only the entropic ``scale'' while preserving the structure, the balance closure, and the entropy-production mechanism. Such bounded or tempered potentials are consistent with the general philosophy of robust divergences and tempered likelihoods in statistical modeling, where extreme ratios are intentionally down-weighted to improve stability \cite{basu1998robust,grunwald2017safe}.

\subsection{Application III: shape control via structural energies}\label{sec:app-shape-control}

The entropic choice $\phi_{\mathrm{ent}}(p,\theta)$ controls pointwise penalties in the density variable, but it does not by itself prescribe geometric regularity of the density profile. Shape control can be incorporated within the same variational calculus by placing structural terms in the non-entropic part $\phi_0$ of the free energy, allowing $\phi=\phi_0+\phi_{\mathrm{ent}}$. In this section we consider three representative mechanisms: Fisher-type smoothing, total-variation edge preservation, and curvature penalization that suppresses high-frequency oscillations beyond simple smoothing. Throughout, the information/chemical potential is understood as the variational derivative $\mu=\delta\phi/\delta p$ when $\phi$ depends on spatial derivatives of $p$.

A Fisher-type structural energy can be introduced by
\begin{equation}\label{eq:phi0-fisher}
\phi_0(p,\nabla p)=\kappa\,|\nabla \sqrt{p}|^2,
\qquad \kappa>0,
\end{equation}
which is equivalent to $\phi_0=\frac{\kappa}{4}\,|\nabla p|^2/p$ for $p>0$. Writing $\rho:=\sqrt{p}$, one has $\phi_0=\kappa|\nabla\rho|^2$ and $p=\rho^2$, hence a variation $\delta p=2\rho\,\delta\rho$ yields
\begin{equation}\label{eq:mu-fisher}
\delta\int_X \kappa|\nabla\rho|^2\,dx
=
-2\kappa\int_X (\Delta\rho)\,\delta\rho\,dx
=
-\kappa\int_X \frac{\Delta\sqrt{p}}{\sqrt{p}}\,\delta p\,dx,
\end{equation}
so that the corresponding contribution to the potential is
\begin{equation}\label{eq:mu0-fisher}
\mu_0^{\mathrm{F}}=\frac{\delta}{\delta p}\int_X \phi_0\,dx
=
-\kappa\,\frac{\Delta\sqrt{p}}{\sqrt{p}}.
\end{equation}
Under the mobility closure $j=-\mathbb{M}(p)\nabla\mu$ with $\mathbb{M}(p)\succeq 0$, this adds a Laplacian-type regularizing component to the driving force and hence promotes smoothing of the density profile while preserving the conservative form $\partial_t p+\nabla\cdot j=0$. Fisher-type regularizations and their variational structures are classical in information geometry and density functional formulations \cite{villani2003topics,carlen1991}.

Edge-preserving and piecewise-constant profiles can be promoted by a total-variation structural energy
\begin{equation}\label{eq:phi0-tv}
\phi_0(p,\nabla p)=\kappa\,|\nabla p|,
\qquad \kappa>0,
\end{equation}
which is not differentiable at $\nabla p=0$ and is therefore typically implemented via a smooth approximation
\begin{equation}\label{eq:phi0-tv-eps}
\phi_0^{\varepsilon}(p,\nabla p)=\kappa\,\sqrt{|\nabla p|^2+\varepsilon^2},
\qquad \varepsilon>0.
\end{equation}
The corresponding variational derivative is obtained by integration by parts:
\begin{equation}\label{eq:mu-tv-eps}
\delta\int_X \kappa\sqrt{|\nabla p|^2+\varepsilon^2}\,dx
=
\int_X \kappa\,\frac{\nabla p}{\sqrt{|\nabla p|^2+\varepsilon^2}}\cdot\nabla(\delta p)\,dx
=
-\int_X \kappa\,\nabla\cdot \left(\frac{\nabla p}{\sqrt{|\nabla p|^2+\varepsilon^2}}\right)\delta p\,dx,
\end{equation}
hence
\begin{equation}\label{eq:mu0-tv}
\mu_0^{\mathrm{TV},\varepsilon}
=
-\kappa\,\nabla\cdot \left(\frac{\nabla p}{\sqrt{|\nabla p|^2+\varepsilon^2}}\right),
\end{equation}
with the limiting subdifferential interpretation as $\varepsilon\to 0$ recovering the TV potential. This term penalizes total variation rather than squared gradients, thereby suppressing diffusion across sharp transitions and promoting edge preservation in the evolving density profile, a mechanism widely used in variational imaging and related PDE flows \cite{rudin1992nonlinear,chambolle2004}.

To suppress high-frequency oscillations beyond first-order smoothing, one may include a curvature-type structural energy that penalizes second derivatives of $\log p$. A representative choice is
\begin{equation}\label{eq:phi0-curvature}
\phi_0(p,\nabla^2 p)=\kappa\,p\,\|\nabla^2\log p\|^2,
\qquad \kappa>0,
\end{equation}
where $\|\cdot\|$ denotes the Frobenius norm. This term is invariant under multiplicative rescalings of $p$ and directly controls curvature of the log-density, thereby penalizing rapid oscillations and stabilizing multimodal shapes without enforcing excessive diffusion of edges. Denoting $\ell:=\log p$, one has $\nabla^2\ell=\nabla^2 p/p-\nabla p\otimes\nabla p/p^2$, and the variational derivative $\mu_0^{\mathrm{C}}=\delta\int_X \phi_0\,dx/\delta p$ is well-defined on smooth strictly positive densities and yields a higher-order (typically fourth-order) contribution to the driving force. While the explicit expression is lengthy, its structure follows from repeated integration by parts and can be treated as a variational derivative operator acting on $\ell$; in particular, it produces a dissipative regularization that targets curvature rather than merely gradients, analogous in spirit to higher-order variational regularizers used to suppress oscillations while preserving salient features \cite{chan2000higher}.

\subsection{Application IV: nonlocal correlations and multimodality}\label{sec:app-nonlocal}

Pointwise entropic potentials and local structural energies can regulate tails and smoothness, but they do not directly encode long-range correlations or collective effects that lead to clustering, repulsion, and multimodal stationary profiles. Such phenomena can be represented within the same variational calculus by introducing a nonlocal interaction energy in the non-entropic component $\phi_0$. Let $\mathcal K:X\to\mathbb{R}$ be an interaction kernel and denote the convolution (or more generally the integral operator) by
\begin{equation}\label{eq:convolution}
(\mathcal K * p)(x):=\int_X \mathcal K(x-y)\,p(y)\,dy,
\end{equation}
with the understanding that Eq.~\ref{eq:convolution} can be replaced by $(\mathcal K p)(x)=\int_X \mathcal K(x,y)p(y)\,dy$ on general domains. We introduce the nonlocal interaction energy density
\begin{equation}\label{eq:phi0-nonlocal}
\phi_0(p,\mathcal K*p)=\frac12\,p\,(\mathcal K*p),
\end{equation}
so that the associated nonlocal contribution to the potential is the variational derivative
\begin{equation}\label{eq:mu-nonlocal}
\mu_{\mathrm{nl}}=\frac{\delta}{\delta p}\left[\frac12\int_X p(\mathcal K*p)\,dx\right]=\mathcal K*p,
\end{equation}
assuming $\mathcal K$ is symmetric so that the quadratic form is well-defined. For a nonsymmetric kernel, the variational derivative is given by the symmetrized operator. The full potential becomes $\mu=\partial_p\phi_{\mathrm{ent}}+\mu_{\mathrm{loc}}+\mu_{\mathrm{nl}}$, where $\mu_{\mathrm{loc}}$ denotes any local structural contributions such as those in Section~\ref{sec:app-shape-control}. Under the mobility closure $j=-\mathbb{M}(p)\nabla\mu$ with $\mathbb{M}(p)\succeq 0$, the probability flow reads
\begin{equation}\label{eq:pde-nonlocal}
\partial_t p=\nabla\cdot\big(\mathbb{M}(p)\nabla\mu\big)
=
\nabla\cdot\big(\mathbb{M}(p)\nabla(\partial_p\phi_{\mathrm{ent}}+\mu_{\mathrm{loc}}+\mathcal K*p)\big),
\end{equation}
and the entropy production retains the structural form $\Xi=\nabla\mu\cdot\mathbb{M}(p)\nabla\mu+D_{\mathrm{nd}}\ge 0$. Nonlocal interaction energies of the form Eq.~\ref{eq:phi0-nonlocal} are standard in aggregation--diffusion models and nonlocal gradient flows, where they serve as a principled mechanism for inducing clustering or repulsion depending on the sign structure of the kernel \cite{carrillo2003asymptotic,carrillo2010particle}.

Stationary states under no-flux boundary conditions satisfy $j=0$ and therefore $\nabla\mu=0$ in $X$, implying that $\mu$ is spatially constant on each connected component. In the isothermal setting with $\phi_{\mathrm{ent}}$ fixed and with an optional external potential $U$ included through $\phi_0(p)=pU$, the stationarity condition takes the general form
\begin{equation}\label{eq:stationary-nonlocal}
\partial_p\phi_{\mathrm{ent}}(p,\theta)+U+\mu_{\mathrm{loc}}(p)+\mathcal K*p=\mu_\ast,
\end{equation}
for a constant $\mu_\ast$. The qualitative structure of stationary solutions is governed by the competition between the entropic/local regularizing terms and the nonlocal interaction term. If $\mathcal K$ is predominantly attractive, then $\mathcal K*p$ tends to reinforce concentration of mass and can induce clustering and multimodality; if $\mathcal K$ is predominantly repulsive or positive-definite in an appropriate sense, then $\mathcal K*p$ tends to spread mass and can enforce separation between peaks or suppress aggregation. A useful diagnostic is obtained by linearizing Eq.~\ref{eq:pde-nonlocal} about a homogeneous state on periodic domains: in Fourier variables, the interaction contribution introduces a multiplier $\widehat{\mathcal K}(k)$. A mode becomes unstable only when this interaction contribution overcomes the stabilizing local entropic and structural terms in the linearized coefficient. Thus attractive kernels with sufficiently negative $\widehat{\mathcal K}(k)$ over a band of modes can trigger pattern-forming instabilities, whereas repulsive or positive-definite kernels tend to stabilize homogeneous profiles \cite{carrillo2010particle}.

Combining the nonlocal interaction Eq.~\ref{eq:phi0-nonlocal} with the structural energies in Section~\ref{sec:app-shape-control} yields controlled multimodality. Fisher-type or curvature-type terms penalize excessive oscillations and provide higher-order regularization, while TV-type terms can preserve sharp interfaces between clustered regions. Consequently, the framework supports multimodal probability flows in which clustering is induced by nonlocal interactions and high-frequency artifacts are controlled by local structural energies, without altering the balance closure, the entropy equation, or the structural nonnegativity of entropy production.

\begin{table}[t]
\centering
\scriptsize
\setlength{\tabcolsep}{3pt}
\renewcommand{\arraystretch}{1.12}
\caption{Summary of modular choices within the same balance--entropy routing.}
\label{tab:module-summary}
\begin{tabular}{p{0.18\textwidth}p{0.27\textwidth}p{0.27\textwidth}p{0.20\textwidth}}
\toprule
Module & Potential/component & Effect and stationary behavior & Notes/assumptions \\
\midrule
KL/Shannon baseline & $\mu_{\mathrm{ent}}=\theta(1+\log(p/\pi))$ or gauge-shifted $\theta\log(p/\pi)$ & Reference drift--diffusion; Gibbs recovery $p\propto \pi e^{-U/\theta}$ under no-flux stationarity & $p>0$, $Z<\infty$; additive gauge allowed \\
Composable $q$-log / power scale & $\mu_{\mathrm{ent}}=\theta\psi_q(p/\pi)$ with $\psi_q=\ln_q$ & Tunable tail and sparsity control; algebraic heavy tails for $q>1$ and compact support for $q<1$ & Monotone invertible scale on the active range \\
Saturating robust scale & $\mu_{\mathrm{ent}}=\theta f(\log(p/\pi))$ with bounded $f$ & Bounded potential drive under extreme density ratios; tempered or truncated stationary profiles & Bounded $f$ and bounded log-scale derivative $f'$ \\
Fisher / TV / curvature structure & $\phi_0=\kappa|\nabla\sqrt p|^2$, $\kappa|\nabla p|$, or $\kappa p\|\nabla^2\log p\|^2$ & Smoothing, edge preservation, or suppression of high-frequency oscillations; higher-order or nonsmooth stationarity & Boundary conditions and regularity; TV via subgradient or $\varepsilon$-regularization \\
Nonlocal interaction & $\phi_0=\frac12 p(\mathcal K*p)$ & Correlations, clustering, repulsion, and multimodality; integral stationarity condition with $\mathcal K*p$ & Symmetric kernel for the displayed derivative; stability depends on local terms and $\widehat{\mathcal K}$ \\
\bottomrule
\end{tabular}
\end{table}

\noindent This table is meant as a model-design map: the same balance closure and entropy routing are kept fixed, while the entropic chart or non-entropic structural component is substituted according to the desired effect.

\renewcommand{\arraystretch}{1.0}

\section{From maximum-entropy states to information-driven probability paths}\label{sec:maxent-information-paths}

\subsection{Probability laws as information states}\label{sec:probability-information-states}

Information theory assigns quantitative functionals to a probability law, while inference and learning change that law. Let $\mathcal P(X)$ denote the set of normalized densities on $X$. A static information principle selects an element $p^\ast\in\mathcal P(X)$, whereas an information-processing dynamics generates a path
\begin{equation}\label{eq:probability-path-map}
[t_0,T]\ni t\longmapsto p_t\in\mathcal P(X).
\end{equation}
The distinction is fundamental. Entropy, relative entropy, and mutual information quantify properties or relations of probability states; probability transport, Bayesian assimilation, and generative evolution require rules for moving between such states. The entropic Lagrangian supplies this second layer by combining a probability-balance channel, an entropy-production channel, and external information ports within one history-aware variational object.

For the KL/Shannon sector, a convenient time-dependent free-energy functional is
\begin{equation}\label{eq:information-free-energy}
\mathcal F_t[p]
=
\int_X\left[
\theta p\log\frac{p}{\pi}+p\,U(x,t)
\right]dx,
\end{equation}
where $\pi>0$ is a reference density, $\theta>0$ is an information/thermal scale, and $U$ is an information potential. Depending on the application, $U$ may encode physical energy, moment constraints, a negative log-likelihood, observations, or other externally supplied information. Its variational drive is
\begin{equation}\label{eq:information-mu}
\mu=\frac{\delta\mathcal F_t}{\delta p}
=U(x,t)+\theta\left(1+\log\frac{p}{\pi}\right).
\end{equation}
With $j=-\mathbb M(p)\nabla\mu$, the same object determines both the selected equilibrium state and the path by which it is approached.

\subsection{Maximum entropy as the stationary closed-system sector}\label{sec:maxent-to-pdel}

The maximum-entropy (MaxEnt) principle selects a static probability density by maximizing Shannon entropy, or equivalently minimizing relative entropy, subject to constraints that encode the available information \cite{shannon1948,jaynes1957,cover2006}. Given constraint functions $\{f_k\}_{k=1}^m$ and prescribed moments $\{c_k\}_{k=1}^m$, the relative-entropy problem reads
\begin{equation}\label{eq:maxent-kl}
\min_{p\ge 0}\ \int_X p(x)\log\frac{p(x)}{\pi(x)}\,dx
\quad\text{subject to}\quad
\int_X p\,dx=1,\qquad
\int_X p\,f_k\,dx=c_k,\ \ k=1,\dots,m.
\end{equation}
Introducing Lagrange multipliers $\lambda_0,\lambda_1,\dots,\lambda_m$, one considers
\begin{equation}\label{eq:maxent-lagrangian}
\mathcal L[p]
=
\int_X p\log\frac{p}{\pi}\,dx
+\lambda_0\!\left(\int_X p\,dx-1\right)
+\sum_{k=1}^m \lambda_k\!\left(\int_X p\,f_k\,dx-c_k\right).
\end{equation}
The stationarity condition is
\begin{equation}\label{eq:maxent-eulerlagrange}
\log\frac{p(x)}{\pi(x)}+1+\lambda_0+\sum_{k=1}^m \lambda_k f_k(x)=0,
\end{equation}
and therefore
\begin{equation}\label{eq:maxent-exponential}
p^\ast(x)
=
Z^{-1}\,\pi(x)\exp\!\left(-\sum_{k=1}^m \lambda_k f_k(x)\right),
\qquad
Z=\int_X \pi(x)\exp\!\left(-\sum_{k=1}^m \lambda_k f_k(x)\right)dx.
\end{equation}
A single constraint $f_1=U$ with $\lambda_1=\beta$ gives the Gibbs form $p^\ast\propto\pi e^{-\beta U}$. This is a state-selection rule: it identifies $p^\ast$ but does not itself prescribe a probability path, an entropy-production identity, or external information ports.

The same state arises dynamically by choosing
\begin{equation}\label{eq:pdel-entropic-choice}
\phi_{\mathrm{ent}}(p,\theta)=\theta p\log\frac{p}{\pi},
\qquad
\phi_0(p)=p\,U(x),
\end{equation}
which gives
\begin{equation}\label{eq:pdel-mu-maxent}
\mu(x,t)=U(x)+\theta\left(1+\log\frac{p(x,t)}{\pi(x)}\right).
\end{equation}
Under the mobility closure
\begin{equation}\label{eq:pdel-mobility-maxent}
j=-\mathbb M(p)\nabla\mu,
\qquad
\mathbb M(p)\succeq0,
\end{equation}
the continuity equation becomes $\partial_t p=\nabla\cdot(\mathbb M(p)\nabla\mu)$ and the entropy-production term is nonnegative. For a connected domain, a positive stationary state with vanishing flux and nondegenerate mobility has $\nabla\mu=0$, hence
\begin{equation}\label{eq:pdel-stationary-maxent}
U(x)+\theta\left(1+\log\frac{p^\ast(x)}{\pi(x)}\right)=\mu_\ast,
\end{equation}
which yields
\begin{equation}\label{eq:pdel-gibbs-recovery}
p^\ast(x)=Z^{-1}\,\pi(x)\exp\!\left(-\frac{U(x)}{\theta}\right),
\qquad
Z=\int_X\pi(x)\exp\!\left(-\frac{U(x)}{\theta}\right)dx.
\end{equation}

\begin{proposition}[Maximum entropy as the stationary sector]\label{prop:maxent-stationary-sector}
Let $X$ be connected, $\theta>0$, $\pi>0$, and let $U=\theta\sum_{k=1}^m\lambda_k f_k$. Assume the mobility is positive definite on the support of a positive normalized density. Then the MaxEnt solution Eq.~\ref{eq:maxent-exponential} is a zero-flux stationary state of the probability flow generated by Eqs.~\ref{eq:pdel-mu-maxent}--\ref{eq:pdel-mobility-maxent}. Conversely, every positive zero-flux stationary state of this flow satisfies the MaxEnt stationarity condition for the corresponding constraints.
\end{proposition}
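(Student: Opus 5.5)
The plan is a direct two-way verification that uses only the explicit form of the potential in Eq.~\ref{eq:pdel-mu-maxent}. \emph{Forward direction.} Substitute $p^\ast$ from Eq.~\ref{eq:maxent-exponential} into $\mu$ with $U=\theta\sum_k\lambda_k f_k$:
\[
\log\frac{p^\ast}{\pi}=-\log Z-\sum_k\lambda_k f_k .
\]
This gives
\[
\mu=\theta\sum_k\lambda_kf_k+\theta\Big(1-\log Z-\sum_k\lambda_kf_k\Big)=\theta(1-\log Z),
\]
which is a spatial constant. Hence $\nabla\mu=0$, so $j=-\mathbb M(p^\ast)\nabla\mu=0$ for any mobility, and $\partial_tp^\ast=\nabla\cdot(\mathbb M\nabla\mu)=0$. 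The no-flux condition Eq.~\ref{eq:noflux} holds trivially, so $p^\ast$ is a zero-flux stationary state. Positivity and normalization of $p^\ast$ come from $\pi>0$ and from the standing assumption $Z<\infty$, which is implicit in Eq.~\ref{eq:maxent-exponential}.

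\emph{Converse.} Let $p>0$ be normalized with $j=0$. Then $\mathbb M(p)\nabla\mu=0$ pointwise. Positive definiteness of the mobility on the support, which is all of $X$ since $p>0$, gives $\nabla\mu=0$. Connectedness of $X$ then yields $\mu\equiv\mu_\ast$ constant, which is Eq.~\ref{eq:pdel-stationary-maxent}. Dividing by $\theta>0$ and rearranging gives
\[
\log\frac{p}{\pi}+1+\lambda_0+\sum_k\lambda_kf_k=0,\qquad \lambda_0:=-\mu_\ast/\theta .
\]
This is exactly the Euler--Lagrange condition Eq.~\ref{eq:maxent-eulerlagrange}. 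Normalization fixes $\lambda_0$ through $e^{-1-\lambda_0}=Z^{-1}$, so $p$ has the exponential form Eq.~\ref{eq:maxent-exponential}. Its moments $\int pf_k\,dx$ then define the constraint values $c_k$ for which $p$ is the MaxEnt stationary point. This is what ``corresponding constraints'' means in the statement.

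\emph{Main obstacle.} The delicate step is the passage from $j=0$ to $\mu$ being constant. This needs enough regularity that $\mu$ is, say, $W^{1,1}_{\mathrm{loc}}$, so that $\nabla\mu=0$ a.e.\ on a connected open set implies constancy. It also needs $p>0$ so that $\log p$ and the mobility nondegeneracy apply everywhere. I would state these as standing smoothness assumptions, taking $p$, $\pi$ and the $f_k$ continuously differentiable, and invoke the standard fact that a function with vanishing gradient on a connected domain is constant. Beyond this, I would add one remark: the correspondence is with the stationarity condition only, not with global optimality, although strict convexity of $p\log(p/\pi)$ on the affine constraint set would upgrade stationarity to the unique minimizer whenever the moments are attained.
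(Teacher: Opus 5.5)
Your proposal is correct and follows the paper's proof essentially step for step. In the forward direction you show that $\mu$ is spatially constant at $p^\ast$; your $\theta(1-\log Z)$ from direct substitution is the paper's $-\theta\lambda_0$, via $e^{-1-\lambda_0}=Z^{-1}$. In the converse you use positive-definite mobility and connectedness to get $\mu\equiv\mu_\ast$, then normalization to recover Eq.~\ref{eq:maxent-exponential}. Your regularity caveat and your reading of ``corresponding constraints'' (the moments of $p$ define the $c_k$) are sensible additions that the paper leaves implicit.
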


\begin{proof}
Equation~\ref{eq:maxent-eulerlagrange} and the definition of $U$ imply
$\mu=U+\theta(1+\log(p^\ast/\pi))=-\theta\lambda_0$, which is spatially constant. Hence $\nabla\mu=0$, $j=0$, and the density is stationary. Conversely, $j=0$ and positive-definite mobility imply $\nabla\mu=0$. Since $X$ is connected, $\mu$ is constant; solving Eq.~\ref{eq:pdel-stationary-maxent} and enforcing normalization gives Eq.~\ref{eq:maxent-exponential}.
\end{proof}

The preceding equivalence is not restricted to a finite-dimensional exponential family. It also yields a precise representation result for general probability laws.

\begin{corollary}[Representation of a prescribed probability law]\label{cor:probability-representation}
Let $p^\dagger$ be any strictly positive normalized density on $X$ and let $\pi>0$. For any $\theta>0$, define the information potential
\begin{equation}\label{eq:target-information-potential}
U^\dagger(x):=-\theta\log\frac{p^\dagger(x)}{\pi(x)}+C,
\end{equation}
where $C$ is arbitrary. Then $p^\dagger$ is the stationary no-flux state generated by the KL/Shannon entropic module and the potential $U^\dagger$.
\end{corollary}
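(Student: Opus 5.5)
The plan is a direct substitution: compute the chemical potential of $p^\dagger$ under the KL/Shannon module with potential $U^\dagger$, show it is spatially constant, and deduce zero flux and stationarity as in the forward direction of Proposition~\ref{prop:maxent-stationary-sector}. Using Eq.~\ref{eq:information-mu} (equivalently Eq.~\ref{eq:pdel-mu-maxent}) with $U=U^\dagger$ and $p=p^\dagger$,
\begin{equation*}
\mu^\dagger=U^\dagger+\theta\Big(1+\log\frac{p^\dagger}{\pi}\Big)
=-\theta\log\frac{p^\dagger}{\pi}+C+\theta+\theta\log\frac{p^\dagger}{\pi}=C+\theta .
\end{equation*}
The logarithms are finite because $p^\dagger>0$ and $\pi>0$. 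The key cancellation is between the logarithmic term in $U^\dagger$ and the entropic term of $\mu_{\mathrm{ent}}$.

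Since $\mu^\dagger$ is a constant, $\nabla\mu^\dagger=0$. The closure Eq.~\ref{eq:pdel-mobility-maxent} then gives $j=-\mathbb M(p^\dagger)\nabla\mu^\dagger=0$ for any mobility $\mathbb M\succeq0$, so positive definiteness is not needed in this direction. Hence $\nabla\cdot j=0$, and the continuity equation Eq.~\ref{eq:continuity-sec4} gives $\partial_t p=0$ at $p^\dagger$. The no-flux condition Eq.~\ref{eq:noflux} holds trivially, and normalization is preserved by Eq.~\ref{eq:mass-conservation}. The entropy production $\Xi=\nabla\mu\cdot\mathbb M\nabla\mu$ also vanishes, which confirms that $p^\dagger$ is a genuine equilibrium of the flow.

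To connect with the Gibbs representation, I would check that Eq.~\ref{eq:pdel-gibbs-recovery} with $U=U^\dagger$ returns $p^\dagger$. Indeed $\pi e^{-U^\dagger/\theta}=e^{-C/\theta}p^\dagger$, so $Z=e^{-C/\theta}\int_X p^\dagger\,dx=e^{-C/\theta}$, which is finite because $p^\dagger$ is normalized. Therefore $Z^{-1}\pi e^{-U^\dagger/\theta}=p^\dagger$, independently of $C$. This shows that $C$ is exactly the additive gauge noted in Remark~\ref{rem:gauge-kl}.

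Reading ``the'' stationary state as a uniqueness claim, I would invoke the converse of Proposition~\ref{prop:maxent-stationary-sector}. On a connected $X$ with mobility positive definite on the support, every positive zero-flux stationary state satisfies $\nabla\mu=0$. By Eq.~\ref{eq:pdel-stationary-maxent} and normalization, such a state equals the Gibbs density above, namely $p^\dagger$. This is the only step needing care, because it uses the connectedness and nondegenerate-mobility hypotheses of Proposition~\ref{prop:maxent-stationary-sector}, which I would state as the standing assumptions for the uniqueness part. The existence part needs only positivity of $p^\dagger$ and $\pi$.
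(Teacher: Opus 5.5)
Your proposal is correct and follows the paper's proof: you substitute $U^\dagger$ to get the spatially constant potential $\mu^\dagger=C+\theta$, conclude $j=0$, and recover $p^\dagger$ from the Gibbs formula with $Z=e^{-C/\theta}$. Your added remarks are accurate refinements that the paper leaves implicit: the existence direction needs no nondegeneracy of the mobility, whereas reading ``the'' as a uniqueness claim relies on connectedness and positive-definite mobility through the converse of Proposition~\ref{prop:maxent-stationary-sector}.
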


\begin{proof}
Substitution of Eq.~\ref{eq:target-information-potential} into Eq.~\ref{eq:pdel-stationary-maxent} gives a spatially constant chemical potential. Equation~\ref{eq:pdel-gibbs-recovery} then returns $p^\dagger$ after normalization.
\end{proof}

Corollary~\ref{cor:probability-representation} formalizes the broad representational reach of maximum entropy: any positive probability law can be expressed as a Gibbs/MaxEnt state relative to a suitable reference and information potential. With finitely many prescribed moments, the admissible potentials are restricted to the span of the selected constraint functions; with a general potential, the representation is correspondingly general. The substantive modeling problem is therefore not whether a target law can be represented, but how its information potential, constraints, and dynamical pathway are determined.

\subsection{Information injection and the pathwise free-energy identity}\label{sec:information-port-identity}

A time-dependent information potential produces an open probability-path problem. We assume fixed $\theta$ and $\pi$, a no-flux boundary, and a time-differentiable potential $U(x,t)$. Define the internal dissipation and supplied information power by
\begin{equation}\label{eq:dissipation-information-power}
\mathcal D(t):=\int_X\nabla\mu\cdot\mathbb M(p)\nabla\mu\,dx,
\qquad
\mathcal P_{\mathrm{info}}(t):=\int_X p(x,t)\,\partial_tU(x,t)\,dx.
\end{equation}

\begin{proposition}[Dissipation--information decomposition]\label{prop:information-power-identity}
For the probability flow generated by Eqs.~\ref{eq:information-free-energy}--\ref{eq:information-mu} and $j=-\mathbb M(p)\nabla\mu$, the free-energy rate satisfies
\begin{equation}\label{eq:information-power-identity}
\frac{d}{dt}\mathcal F_t[p_t]
=
-\mathcal D(t)+\mathcal P_{\mathrm{info}}(t).
\end{equation}
\end{proposition}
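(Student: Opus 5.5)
The plan is to differentiate $\mathcal F_t[p_t]$ directly along the flow and split the time derivative into an explicit part, from the time dependence of $U$, and an implicit part, from the evolution of $p_t$. Assuming enough regularity to differentiate under the integral ($p>0$ and smooth, $U$ time-differentiable, $\theta$ and $\pi$ fixed), the chain rule gives
\[
\frac{d}{dt}\mathcal F_t[p_t]
=\int_X\Big[\theta\Big(1+\log\frac{p}{\pi}\Big)+U\Big]\partial_t p\,dx+\int_X p\,\partial_tU\,dx
=\int_X \mu\,\partial_t p\,dx+\mathcal P_{\mathrm{info}}(t).
\]
Here we use Eq.~\ref{eq:information-mu} for the integrand of the first term. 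The second term is exactly the supplied information power defined in Eq.~\ref{eq:dissipation-information-power}, so it needs no further work.

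For the first term, I will substitute the continuity equation $\partial_t p=-\nabla\cdot j$ from Proposition~\ref{prop:ledger-induced}. Integrating by parts then gives
\[
\int_X\mu\,\partial_tp\,dx=-\int_{\partial X}\mu\,j\cdot n\,dA+\int_X j\cdot\nabla\mu\,dx .
\]
The boundary term vanishes by the no-flux condition Eq.~\ref{eq:noflux}. This is the integrated version of the port split Eq.~\ref{eq:thermo-port-split-review}. The closure $j=-\mathbb M(p)\nabla\mu$ then turns the remaining bulk term into $-\int_X\nabla\mu\cdot\mathbb M(p)\nabla\mu\,dx=-\mathcal D(t)$. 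Adding the two pieces yields Eq.~\ref{eq:information-power-identity}.

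The main obstacle is justifying these formal manipulations rather than the algebra itself. Three points need care:
\begin{itemize}
\item differentiating $p\log p$ under the integral requires $p>0$, or else a convention such as $0\log 0=0$ combined with the vanishing of $\partial_tp$ where $p=0$;
\item the boundary product $\mu\,j\cdot n$ must be defined, which may be a problem where $\log p$ degenerates;
\item when $X=\mathbb R^d$, the decay condition must make the flux at infinity vanish.
\end{itemize}
I would state these as standing regularity hypotheses: $p$ positive and smooth, and $\mu j\cdot n=0$ on $\partial X$ or decaying at infinity. I would also note that the normalization gauge in $\mu$ (the additive $\theta$) contributes $\theta\int_X\partial_tp\,dx=0$ by mass conservation, so the identity is unaffected by the gauge choice of Remark~\ref{rem:gauge-kl}.
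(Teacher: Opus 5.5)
Your proposal is correct and follows the paper's own argument: chain rule to split off $\mathcal P_{\mathrm{info}}(t)$, then the continuity equation, integration by parts with the no-flux condition, and the mobility closure to obtain $-\mathcal D(t)$. Your regularity hypotheses ($p>0$ and smooth, and $\mu\,j\cdot n$ vanishing on $\partial X$ or at infinity) and your gauge remark make explicit what the paper leaves implicit.
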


\begin{proof}
Using the chain rule, the continuity equation, and the no-flux boundary condition gives
\begin{align}
\frac{d}{dt}\mathcal F_t[p_t]
&=\int_X\mu\,\partial_tp\,dx+\int_Xp\,\partial_tU\,dx \\
&=-\int_X\mu\,\nabla\cdot j\,dx+\mathcal P_{\mathrm{info}}(t) \\
&=\int_X\nabla\mu\cdot j\,dx+\mathcal P_{\mathrm{info}}(t) \\
&=-\int_X\nabla\mu\cdot\mathbb M(p)\nabla\mu\,dx+\mathcal P_{\mathrm{info}}(t),
\end{align}
which is Eq.~\ref{eq:information-power-identity}.
\end{proof}

Equation~\ref{eq:information-power-identity} separates two conceptually different mechanisms. When no new information enters, $\partial_tU=0$ and the probability path relaxes monotonically through internal dissipation. When observations, data, controls, or environmental signals change $U$, the term $\mathcal P_{\mathrm{info}}$ measures their instantaneous contribution to the energy-valued information ledger. Thus information updating is represented as an open-system port, rather than being hidden inside an endpoint optimization.

\subsection{Bayesian inference as an information-port realization}\label{sec:bayes-to-pdel}

Bayesian inference updates a prior density $p_0(x)$ into a posterior density $p(x\mid y)$ after observing data $y$, using a likelihood $L(x):=p(y\mid x)$:
\begin{equation}\label{eq:bayes-rule}
p(x\mid y)=\frac{p_0(x)L(x)}{\int_Xp_0(z)L(z)\,dz}.
\end{equation}
The posterior is equivalently the unique minimizer
\begin{equation}\label{eq:bayes-variational}
p^\ast=\arg\min_{p\in\mathcal P(X)}\left\{\mathrm{KL}(p\|p_0)-\int_Xp(x)\log L(x)\,dx\right\}.
\end{equation}
With a multiplier $\lambda$ for normalization, the corresponding functional is
\begin{equation}\label{eq:bayes-lagrangian}
\mathcal J[p]=\int_Xp\log\frac{p}{p_0}\,dx-\int_Xp\log L\,dx+\lambda\left(\int_Xp\,dx-1\right),
\end{equation}
and its first-order condition
\begin{equation}\label{eq:bayes-eulerlagrange}
\log\frac{p(x)}{p_0(x)}+1-\log L(x)+\lambda=0
\end{equation}
recovers Bayes' rule. For conditionally independent observations $y_1,\dots,y_N$, the posterior is $p_N(x)\propto p_0(x)\prod_{n=1}^NL_n(x)$ \cite{mackay2003,bernardo2009}.

To realize this update dynamically, choose $\pi\equiv1$ and introduce the observation potential
\begin{equation}\label{eq:bayes-U}
U(x,t):=-\theta\log p_0(x)-\theta\sum_{n=1}^{N(t)}\log L_n(x),
\end{equation}
with the KL entropic density
\begin{equation}\label{eq:pdel-bayes-ent}
\phi_{\mathrm{ent}}(p,\theta)=\theta p\log p.
\end{equation}
The induced potential is
\begin{equation}\label{eq:pdel-bayes-mu}
\mu=U(x,t)+\theta(1+\log p).
\end{equation}
A stationary no-flux state therefore satisfies
\begin{equation}\label{eq:pdel-bayes-stationary}
p(x,t)=Z(t)^{-1}\exp\!\left(-\frac{U(x,t)}{\theta}\right),
\end{equation}
and substitution of Eq.~\ref{eq:bayes-U} gives
\begin{equation}\label{eq:pdel-bayes-posterior}
p(x,t)=Z(t)^{-1}p_0(x)\prod_{n=1}^{N(t)}L_n(x),
\end{equation}
which is precisely the Bayesian posterior after the information available at time $t$ has been assimilated. The evidence is the normalization factor $Z(t)$.

The dynamical formulation adds information that is absent from the endpoint rule. Between observation events, $U$ is fixed and Eq.~\ref{eq:information-power-identity} gives monotone internal relaxation. During a smooth assimilation schedule, $\mathcal P_{\mathrm{info}}=\int_Xp\,\partial_tU\,dx$ measures the rate at which observation information enters the system. For stepwise data, the same interpretation applies piecewise: each change of $U$ is an information-port event, followed by dissipative relaxation toward the updated posterior. The balance channel continues to conserve total probability, while the entropy channel accounts for the irreversible part of the update.

For a Gaussian illustration, let $p_0(x)=\mathcal N(0,\sigma_0^2)$ and $L(x)=\mathcal N(y;x,\sigma^2)$. The posterior is
\begin{equation}\label{eq:gauss-posterior}
p(x\mid y)=\mathcal N(m,\sigma_{\mathrm{post}}^2),
\qquad
\sigma_{\mathrm{post}}^2=\left(\sigma_0^{-2}+\sigma^{-2}\right)^{-1},
\qquad
m=\sigma_{\mathrm{post}}^2\frac{y}{\sigma^2}.
\end{equation}
The potential $U=-\theta\log p_0-\theta\log L$ is quadratic, and Eq.~\ref{eq:pdel-bayes-stationary} reproduces the same Gaussian mean and variance. Replacing the logarithmic entropic scale by a bounded or composable $\psi$ produces a tempered or robust posterior while preserving the same balance--entropy and information-port structure.

\subsection{Implications for information theory and probabilistic learning}\label{sec:information-ai-implications}

The mathematical objects of information theory are probability laws and functionals defined on them; a substantial part of modern learning theory and generative modeling likewise operates by optimizing, approximating, or transporting distributions. The present construction provides a common dynamic interpretation of these activities. Information functionals define the landscape, the chemical/information potential supplies the local drive, mobility determines the geometry and rate of adaptation, and external data enter through explicit ports. Table~\ref{tab:information-learning-interpretation} summarizes this correspondence.

\begin{table}[htbp]
\centering
\small
\caption{Interpretation of the probability-path variables in information and learning systems.}
\label{tab:information-learning-interpretation}
\begin{tabularx}{\textwidth}{p{0.20\textwidth}YY}
\toprule
Mathematical object & Probability interpretation & Information/learning interpretation \\
\midrule
$p(x,t)$ & Time-dependent probability law & Belief, approximate posterior, model distribution, or generated distribution \\
$\mathcal F_t[p]$ & Free-energy/information functional & Learning objective, regularized risk, or variational inference objective \\
$\mu=\delta\mathcal F_t/\delta p$ & Variational driving potential & Local pressure for probability reallocation \\
$\mathbb M(p)$ & Mobility of probability mass & Update geometry, adaptation rate, or preconditioning \\
$U(x,t)$ and auxiliary ports & External or structural potential & Data, likelihood, observations, controls, or environmental information \\
History channel & Accumulated expenditure & Irreversible learning cost and cumulative information processing \\
\bottomrule
\end{tabularx}
\end{table}

For variational inference, $p_t$ may be an evolving approximate posterior and $U$ collects prior and likelihood information. For diffusion and score-based generation, $p_t$ is the transported law between noise and data regimes, while the driving potential determines the probability flow \cite{sohl2015deep,ho2020denoising,song2021scorebased}. For sequential Bayesian or online inference, changes in $U$ represent newly arriving data and Eq.~\ref{eq:information-power-identity} separates information injection from internal dissipation. These examples differ in constitutive detail, but they share the same state variable---a probability law---and the same structural question: how should that law evolve under information, balance, and irreversibility constraints?

The framework therefore has a scope broader than any single diffusion equation. It supplies a thermodynamic variational architecture in which probability is the state, information potentials are driving forces, entropy production encodes irreversibility, history records accumulated expenditure, and ports represent interaction with data or the environment. Different information-theoretic and learning models can be constructed by changing the constitutive modules while retaining the same probability-balance and entropy-accounting structure.

\subsection{Discussion}\label{sec:discussion-maxent-bayes}

Maximum entropy and Bayesian inference select or update probability states; the entropic Lagrangian supplies an admissible path between them. MaxEnt is recovered as the stationary closed-system sector, whereas a time-dependent information potential acts as an external port and separates supplied information power from internal dissipation.

The central contribution is the path-dependent, channel-resolved construction: upper-limit history terms, balance closure, entropy accounting, and external ports remain explicit while the information or structural potential is changed. This extends state-based information principles to probability transport and open-system updating without reducing the process to endpoint optimization.

\section{Numerical illustrations and discrete energy-ledger checks}\label{sec:numerical}

Two finite-volume examples verify the structure rather than benchmark solvers: mass is conserved, the total free energy satisfies the discrete dissipation ledger, and its components identify the mechanisms that drive or resist the observed pattern. For grid cells $x_i$ with spacing $\Delta x$, density values $p_i(t)$, and potential values $\mu_i(t)$, define the edge flux
\begin{equation}\label{eq:num-fv-flux}
J_{i+1/2}=-m_{i+1/2}\frac{\mu_{i+1}-\mu_i}{\Delta x},
\qquad
\dot p_i=-\frac{J_{i+1/2}-J_{i-1/2}}{\Delta x},
\end{equation}
where $m_{i+1/2}\ge 0$ is the edge mobility. We use $m_{i+1/2}=(p_i+p_{i+1})/2$ in the periodic case and the analogous one-sided no-flux closure at a boundary. Summation of Eq.~\ref{eq:num-fv-flux} over the cells telescopes, so the discrete balance channel conserves total probability. The discrete dissipation channel is
\begin{equation}\label{eq:num-discrete-diss}
D_h(t)=\sum_i m_{i+1/2}
\left(\frac{\mu_{i+1}-\mu_i}{\Delta x}\right)^2\Delta x\ge0.
\end{equation}
For a discrete free energy $F_h[p]$ whose cell-wise variational derivative is $\mu_i$, the semi-discrete identity is
\begin{equation}\label{eq:num-total-energy-ledger}
\frac{dF_h}{dt}=-D_h,
\qquad
R_E(t):=F_h(t)-F_h(0)+\int_0^tD_h(\tau)\,d\tau=0.
\end{equation}
The computations evaluate the energy components, $D_h$, and its accumulated integral independently, so $R_E$ is a genuine total-energy residual.

\subsection{Example 1: composable entropic chart, stationary tails, and energy components}\label{sec:num-tail}

The first test changes only the entropic chart while keeping the same stationary rule. On $X=[-12,12]$, take a uniform reference density $\pi$, temperature $\theta=1$, and confining potential $U(x)=x^2/2$. For the $q$-logarithmic module
\begin{equation}\label{eq:num-q-log}
\ln_q u=\frac{u^{1-q}-1}{1-q},
\qquad \ln_1 u=\log u,
\end{equation}
the no-flux stationary equation is
\begin{equation}\label{eq:num-q-stationary}
U_i+\theta\ln_q\left(\frac{p_i}{\pi_i}\right)=C.
\end{equation}
The constant $C$ is chosen by the normalization $\sum_i p_i\Delta x=1$, using
\begin{equation}\label{eq:num-q-exp}
p_i(C)=\pi_i\exp_q\left(\frac{C-U_i}{\theta}\right),
\qquad
\exp_q(y)=\left[1+(1-q)y\right]_+^{1/(1-q)}.
\end{equation}

To display the energetic role of the constitutive substitution, let $\Psi_q'(u)=\ln_q u$ and $\Psi_q(1)=0$. The discrete stationary free energy is decomposed as
\begin{align}
F_{q,h}&=F_{U,h}+F_{\mathrm{ent},q,h},\label{eq:num-tail-energy-total}\\
F_{U,h}&=\sum_iU_i p_i\Delta x,\qquad
F_{\mathrm{ent},q,h}=\theta\sum_i\pi_i
\Psi_q\left(\frac{p_i}{\pi_i}\right)\Delta x.\label{eq:num-tail-energy-components}
\end{align}
For $q=1$, $\Psi_1(u)=u\log u-u+1$; for $q\ne1,2$,
\begin{equation}
\Psi_q(u)=
\frac{u^{2-q}-(2-q)u+(1-q)}{(1-q)(2-q)}.
\end{equation}
Figure~\ref{fig:num-tail-module} compares the KL case $q=1$, a heavy-tailed case $q=1.35$, and a compact-support case $q=0.65$. The first two panels show that the same balance and stationary-potential conditions generate Gaussian-type, algebraic-tail, or sparse profiles solely through the entropic chart. The third panel resolves $F_{U,h}$ and $F_{\mathrm{ent},q,h}$. These component values are interpreted within each constitutive choice; because changing $q$ changes the free-energy functional itself, the absolute totals are not used as a cross-$q$ ranking.

\begin{figure}[t]
\centering
\includegraphics[width=0.99\textwidth]{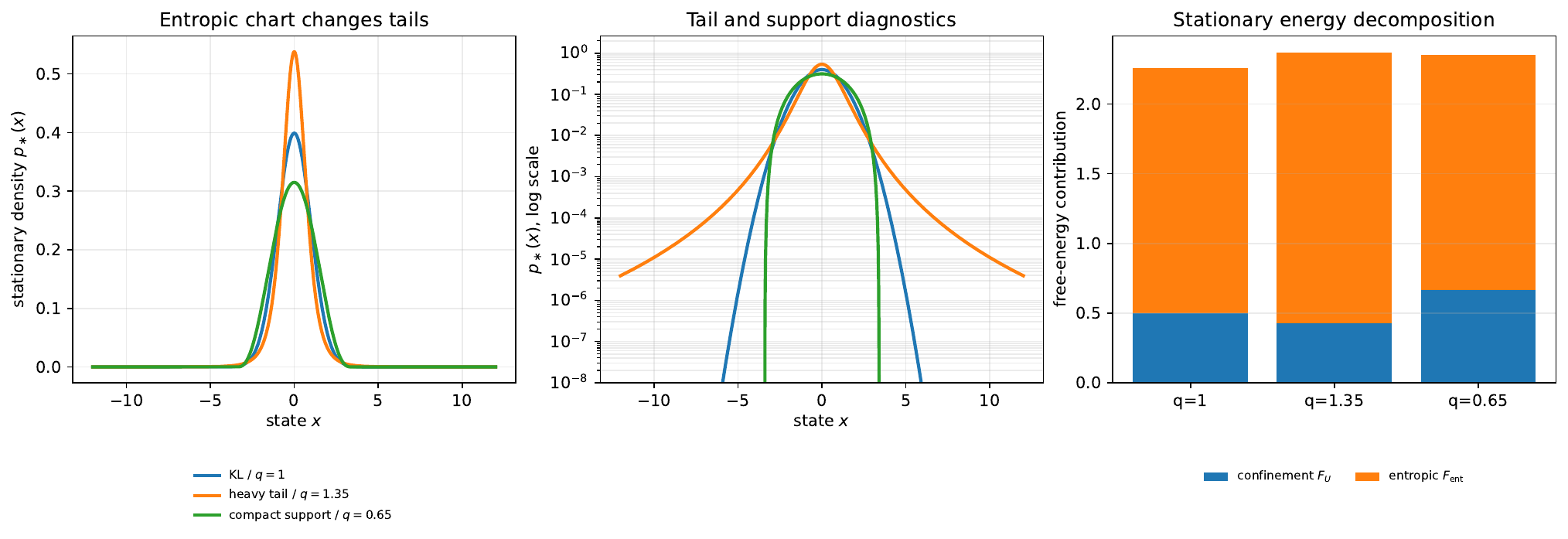}
\caption{Stationary profiles and energy components for the composable $q$-logarithmic potential under the same confinement and normalization. The left panel shows the densities, the middle panel resolves their tails and support on a logarithmic scale, and the right panel decomposes the stationary free energy into confinement and entropic contributions.}
\label{fig:num-tail-module}
\end{figure}
\FloatBarrier

\subsection{Example 2: nonlocal clustering and the resolved total-energy ledger}\label{sec:num-nonlocal}

The second test keeps the KL entropic scale and adds a nonlocal component to $\phi_0$. On the periodic interval $X=[0,2\pi]$, take
\begin{equation}\label{eq:num-nonlocal-mu}
\mu_i=\theta\log\left(\frac{p_i}{\pi_i}\right)+(\mathcal K_h p)_i,
\qquad
\mathcal K(r)=-A\cos(2r),
\end{equation}
with $\theta=0.06$, $A=0.55$, and uniform $\pi$. The kernel is attractive in the second Fourier mode; its interaction contribution overcomes the stabilizing entropic contribution in that mode and drives a two-cluster pattern. The total discrete free energy is resolved as
\begin{align}
F_h[p]&=F_{\mathrm{ent},h}[p]+F_{\mathrm{int},h}[p],\label{eq:num-nonlocal-energy}\\
F_{\mathrm{ent},h}[p]
&=\theta\sum_i p_i\log\left(\frac{p_i}{\pi_i}\right)\Delta x,\label{eq:num-nonlocal-entropic-energy}\\
F_{\mathrm{int},h}[p]
&=\frac12\sum_i p_i(\mathcal K_h p)_i\Delta x.\label{eq:num-nonlocal-interaction-energy}
\end{align}

The energy components identify the clustering mechanism. Over the run, $\Delta F_{\mathrm{ent},h}=+0.087719$, $\Delta F_{\mathrm{int},h}=-0.242575$, and $\Delta F_h=-0.154857$. The entropic term therefore resists concentration, whereas the attractive interaction drives it strongly enough to decrease the total free energy.

Figure~\ref{fig:num-nonlocal-clustering} shows the density, energy components, accumulated dissipation, and total-energy residual. The energy loss $F_h(0)-F_h(T)$ and the accumulated dissipation $\int_0^T D_h\,dt$ both equal $0.1548567414$. The final and maximum relative residuals are $8.1\times10^{-15}$ and $1.4\times10^{-14}$; the energy-partition residual is $2.8\times10^{-17}$.

\begin{figure}[t]
\centering
\includegraphics[width=0.99\textwidth]{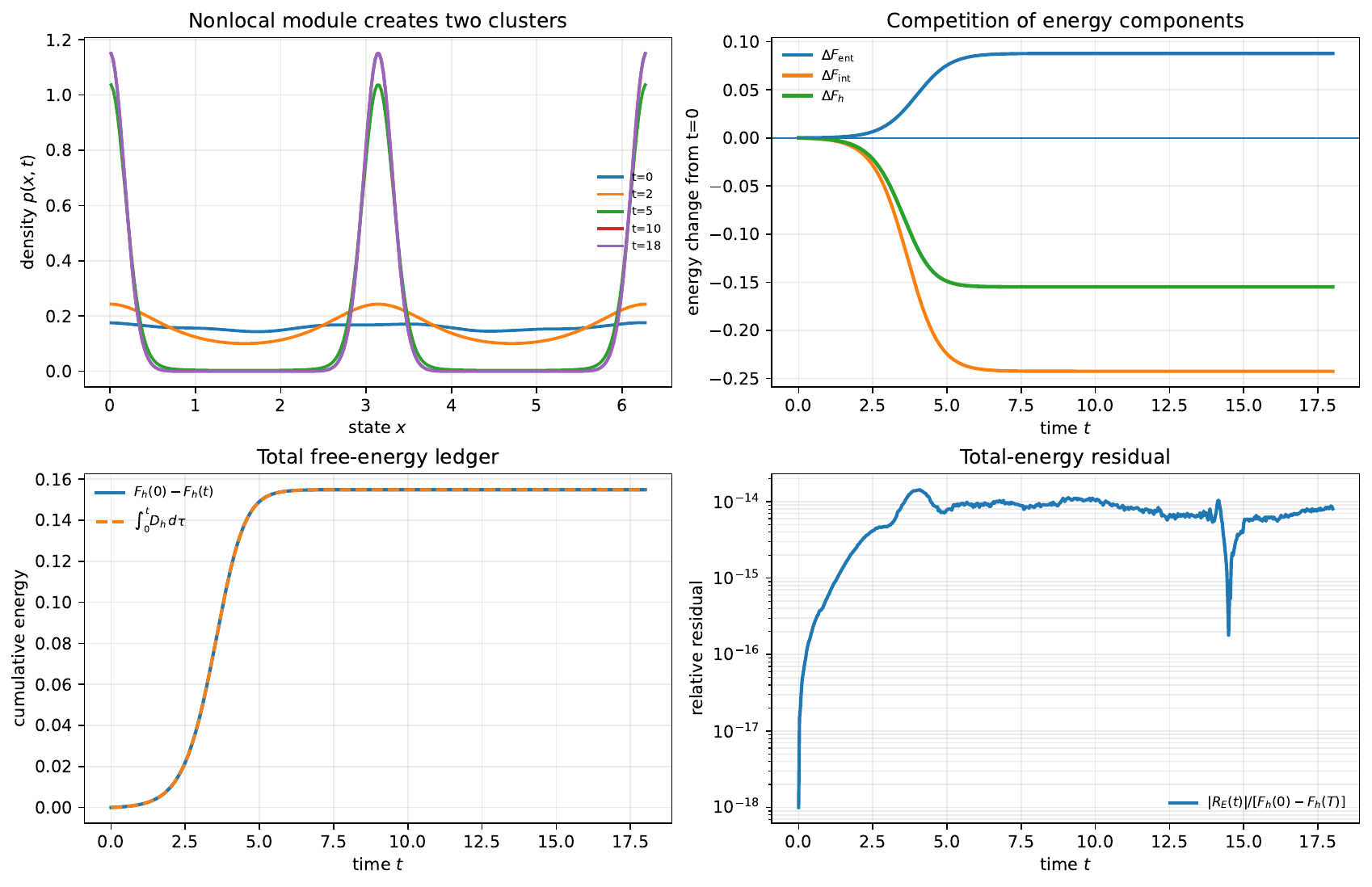}
\caption{Resolved energy ledger for the nonlocal KL flow. Top left: formation of the two-cluster profile. Top right: the entropic contribution increases and opposes concentration, while the attractive interaction contribution decreases more strongly; their sum is the decreasing total free energy. Bottom left: total-energy loss and independently accumulated dissipation coincide. Bottom right: the relative total-energy residual remains near machine precision.}
\label{fig:num-nonlocal-clustering}
\end{figure}
\FloatBarrier

\begin{table}[t]
\centering
\scriptsize
\setlength{\tabcolsep}{4pt}
\renewcommand{\arraystretch}{1.12}
\caption{Numerical verification of balance, energy components, dissipation, and the total-energy residual. The active-support fraction counts cells with $p_i>10^{-6}\max_jp_j$.}
\label{tab:numerical-validation}
\begin{tabular}{lll}
\toprule
Test & Quantity & Value \\
\midrule
Stationary tails, $q=1$ & mass error; max $|\mu-C|$ & $3.3\times10^{-16}$; $1.8\times10^{-15}$ \\
Stationary tails, $q=1.35$ & mass error; max $|\mu-C|$ & $2.2\times10^{-16}$; $2.4\times10^{-14}$ \\
Stationary tails, $q=0.65$ & mass error; max $|\mu-C|$ & $2.2\times10^{-16}$; $1.8\times10^{-15}$ \\
Stationary tails & support fractions, $q=1,1.35,0.65$ & $0.438$; $1.000$; $0.283$ \\
Stationary energy, $q=1$ & $F_U$; $F_{\rm ent}$; $F_q$ & $0.500000$; $1.759115$; $2.259115$ \\
Stationary energy, $q=1.35$ & $F_U$; $F_{\rm ent}$; $F_q$ & $0.427443$; $1.942422$; $2.369865$ \\
Stationary energy, $q=0.65$ & $F_U$; $F_{\rm ent}$; $F_q$ & $0.665548$; $1.686731$; $2.352280$ \\
Nonlocal clustering & maximum mass drift; minimum $D_h$ & $8.9\times10^{-16}$; $1.1\times10^{-16}$ \\
Nonlocal components & $\Delta F_{\rm ent}$; $\Delta F_{\rm int}$; $\Delta F_h$ & $+0.087719$; $-0.242575$; $-0.154857$ \\
Total-energy ledger & $F_h(0)-F_h(T)$; $\int_0^T D_h\,dt$ & $0.1548567414$; $0.1548567414$ \\
Total-energy residual & final relative; maximum relative & $8.1\times10^{-15}$; $1.4\times10^{-14}$ \\
Independent checks & energy-partition residual; energy increases & $2.8\times10^{-17}$; $0$ \\
\bottomrule
\end{tabular}
\end{table}
\FloatBarrier

\section{Conclusions and outlook}\label{sec:summary-scope}

This paper developed a path-dependent entropic Lagrangian calculus for probability flows. An energy-valued stored state, upper-limit history terms, restricted accounting generators, and explicit port routing jointly produce the thermal state relation, conservative probability balance, and nonnegative production under mobility closure.

Maximum entropy selects probability states, whereas the entropic Lagrangian generates admissible probability paths. The KL/Shannon sector recovers MaxEnt and Bayesian laws as stationary no-flux states; time-dependent information potentials yield
\[
\frac{d}{dt}\mathcal F_t[p_t]=-\mathcal D(t)+\mathcal P_{\mathrm{info}}(t),
\]
which separates internal relaxation from supplied information. Composable entropic charts and structural energies then change tails, robustness, regularity, and nonlocal organization without changing the balance--entropy architecture. The numerical examples confirm this modularity together with mass conservation and the resolved total-energy ledger.

Because probability laws underlie information theory, inference, and probabilistic learning, the formulation provides a common thermodynamic language for probability transport, information injection, irreversible production, and open-system exchange. Future work will address data assimilation, generative probability flows, interacting learning systems, and structure-preserving discretizations.

\appendix

\section{Composable potentials and induced composition laws}\label{app:composable}

Let $\pi:X\to(0,\infty)$ be a fixed reference density and set the relative density $u:=p/\pi$. We consider entropic potentials of the form $\mu_{\mathrm{ent}}=\theta\,\psi(u)$ with $\psi:(0,\infty)\to\mathbb{R}$, and we call $\psi$ composable if it is generated by a monotone chart on the logarithmic scale. Specifically, let $f:\mathbb{R}\to I\subset\mathbb{R}$ be strictly monotone with $f(0)=0$, and define
\begin{equation}\label{eq:appA-psi-f}
\psi(u):=f(\log u),\qquad u>0.
\end{equation}
For $u_1,u_2>0$ one has $\log(u_1u_2)=\log u_1+\log u_2$, and therefore
\begin{equation}\label{eq:appA-comp}
\psi(u_1u_2)=f(\log u_1+\log u_2)=:y_1\oplus_f y_2,
\qquad y_i:=\psi(u_i).
\end{equation}
If $f$ is invertible on its range, the induced composition on the $y$-variable is explicitly
\begin{equation}\label{eq:appA-oplus}
y_1\oplus_f y_2:=f \big(f^{-1}(y_1)+f^{-1}(y_2)\big),
\end{equation}
with identity element $0$ since $\psi(1)=f(0)=0$. The logarithmic baseline corresponds to $f(\xi)=\xi$, so that $\psi(u)=\log u$ and $\oplus_f$ reduces to ordinary addition. Nonlinear charts $f$ generate deformed additivity, which can be used to temper extreme ratios $u\to 0$ or $u\to\infty$ while maintaining a closed algebra under multiplicative composition of relative densities. 

Several choices of $f$ recover common generalized logarithms. If $f(\xi)=(e^{\alpha\xi}-1)/\alpha$ for $\alpha\neq 0$, then $\psi(u)=(u^\alpha-1)/\alpha$ and
\begin{equation}\label{eq:appA-alpha-oplus}
\psi(u_1u_2)=\psi(u_1)+\psi(u_2)+\alpha\,\psi(u_1)\psi(u_2).
\end{equation}
If $\alpha=1-q$, then $\psi=\ln_q$ is the $q$-logarithm and Eq.~\ref{eq:appA-alpha-oplus} becomes the standard $q$-composition rule \cite{tsallis1988,naudts2011}. If $f(\xi)=\tfrac{1}{\alpha}\tanh(\alpha\xi)$, then $\psi(u)=\tfrac{1}{\alpha}\tanh(\alpha\log u)$ is bounded, and the induced $\oplus_f$ is non-polynomial but remains associative and commutative by construction through Eq.~\ref{eq:appA-oplus}.

\section{Asymptotic criteria for heavy tails, compact support, and bounded driving}\label{app:asymptotics}

Assume $\theta$ is spatially uniform and $\phi=\phi_0+\phi_{\mathrm{ent}}$, with $\phi_0(p)=p\,U(x)$ for an external potential $U$ and $\mu_{\mathrm{ent}}=\theta\,\psi(p/\pi)$. Stationary no-flux states satisfy $\nabla\mu=0$ and hence
\begin{equation}\label{eq:appB-stationary}
U(x)+\theta\,\psi \left(\frac{p(x)}{\pi(x)}\right)=\mu_\ast,
\end{equation}
for a constant $\mu_\ast$. If $\psi$ is strictly increasing, Eq.~\ref{eq:appB-stationary} yields the inversion on the active range of $\psi$
\begin{equation}\label{eq:appB-inversion}
p(x)=\pi(x)\,\psi^{-1}\left(\frac{\mu_\ast-U(x)}{\theta}\right).
\end{equation}
The far-field regime $|x|\to\infty$ typically corresponds to $U(x)\to\infty$, which forces $(\mu_\ast-U(x))/\theta\to-\infty$ and therefore probes the asymptotics of $\psi^{-1}(y)$ as $y\to-\infty$, equivalently the behavior of $\psi(u)$ as $u\to 0^+$. If $\psi_-:=\lim_{u\to0^+}\psi(u)>-\infty$, Eq.~\ref{eq:appB-inversion} is understood only where $(\mu_\ast-U)/\theta>\psi_-$, and $p$ is set to zero outside this active range.

If $\psi(u)\sim \log u$ as $u\to 0^+$, then $\psi^{-1}(y)\sim e^{y}$ as $y\to-\infty$ and Eq.~\ref{eq:appB-inversion} recovers the Gibbs tail $p\sim \pi\,e^{-U/\theta}$. If $|\psi(u)|/|\log u|\to\infty$ as $u\to0^+$, then $\psi^{-1}(y)$ decays more slowly than $e^{y}$ as $y\to-\infty$, and Eq.~\ref{eq:appB-inversion} produces heavier tails relative to the Gibbs baseline. Conversely, if $|\psi(u)|/|\log u|\to0$, the inverse decays more rapidly and yields lighter tails. A particularly transparent family is $\psi_q(u)=\ln_q u$: for $q>1$, one has $\psi_q(u)\sim -u^{1-q}/(q-1)$ as $u\to 0^+$, so that $\psi_q^{-1}(y)\sim \big(-(q-1)y\big)^{-1/(q-1)}$ as $y\to-\infty$, yielding algebraic heavy tails; for $q<1$, $\ln_q u$ has a finite lower bound as $u\to 0^+$ and the active-range interpretation gives a cutoff, yielding compact support in Eq.~\ref{eq:appB-inversion} \cite{tsallis1988,borland1998,naudts2011}.

Robustness and bounded driving are controlled by the growth of $\psi(u)$ as $u\to 0^+$ and $u\to\infty$, together with its derivative on the logarithmic density-ratio scale. If $\psi$ is bounded on $(0,\infty)$, then for bounded $\theta$ the entropic potential $\mu_{\mathrm{ent}}=\theta\psi(u)$ is uniformly bounded, preventing divergence of the driving force due solely to extreme ratios. If $u\psi'(u)$ is bounded and $p$ remains strictly positive, then
\[
\nabla\mu_{\mathrm{ent}}=\psi(u)\nabla\theta+\theta\,u\psi'(u)\nabla\log u
\]
can be estimated in terms of $\nabla\log u$, and the quadratic dissipation $\nabla\mu\cdot\mathbb{M}(p)\nabla\mu$ cannot blow up due to growth of $\psi$ or its logarithmic-scale derivative as $u\to 0,\infty$.

\section{Variational derivatives for Fisher, TV, and curvature regularizers}\label{app:varder}

We record standard variational derivatives used in Sections~\ref{sec:app-shape-control} and \ref{sec:app-nonlocal}. All expressions are understood under boundary conditions compatible with the required integrations by parts, such as periodicity or no-flux-type conditions for the corresponding operators, and for smooth strictly positive densities $p$ when $\log p$ is involved.

For the Fisher-type term
\begin{equation}\label{eq:appC-fisher}
\mathcal F_{\mathrm{F}}[p]:=\int_X \kappa\,|\nabla\sqrt p|^2\,dx,
\qquad \kappa>0,
\end{equation}
writing $\rho=\sqrt p$ and using $\delta p=2\rho\,\delta\rho$ yields
\begin{equation}\label{eq:appC-fisher-var}
\delta \mathcal F_{\mathrm{F}}[p]
=
2\kappa\int_X \nabla\rho\cdot\nabla(\delta\rho)\,dx
=
-2\kappa\int_X (\Delta\rho)\,\delta\rho\,dx
=
-\kappa\int_X \frac{\Delta\sqrt p}{\sqrt p}\,\delta p\,dx,
\end{equation}
hence
\begin{equation}\label{eq:appC-mu-fisher}
\frac{\delta \mathcal F_{\mathrm{F}}}{\delta p}
=
-\kappa\,\frac{\Delta\sqrt p}{\sqrt p}.
\end{equation}

For the smoothed total-variation term
\begin{equation}\label{eq:appC-tv}
\mathcal F_{\mathrm{TV}}^\varepsilon[p]:=\int_X \kappa\,\sqrt{|\nabla p|^2+\varepsilon^2}\,dx,
\qquad \kappa>0,\ \varepsilon>0,
\end{equation}
one obtains
\begin{equation}\label{eq:appC-tv-var}
\delta \mathcal F_{\mathrm{TV}}^\varepsilon[p]
=
\kappa\int_X \frac{\nabla p}{\sqrt{|\nabla p|^2+\varepsilon^2}}\cdot\nabla(\delta p)\,dx
=
-\kappa\int_X \nabla\cdot \left(\frac{\nabla p}{\sqrt{|\nabla p|^2+\varepsilon^2}}\right)\delta p\,dx,
\end{equation}
hence
\begin{equation}\label{eq:appC-mu-tv}
\frac{\delta \mathcal F_{\mathrm{TV}}^\varepsilon}{\delta p}
=
-\kappa\,\nabla\cdot \left(\frac{\nabla p}{\sqrt{|\nabla p|^2+\varepsilon^2}}\right).
\end{equation}

For the curvature-type term
\begin{equation}\label{eq:appC-curv}
\mathcal F_{\mathrm{C}}[p]:=\int_X \kappa\,p\,\|\nabla^2\log p\|^2\,dx,
\qquad \kappa>0,
\end{equation}
let $\ell=\log p$, so that $\delta \ell=\delta p/p$ and $\delta(\nabla^2\ell)=\nabla^2(\delta p/p)$. A direct computation gives
\begin{equation}\label{eq:appC-curv-var}
\delta \mathcal F_{\mathrm{C}}[p]
=
\kappa\int_X \Big(\|\nabla^2\ell\|^2\,\delta p
+2p\,\nabla^2\ell:\nabla^2 \left(\frac{\delta p}{p}\right)\Big)\,dx,
\end{equation}
and repeated integration by parts yields a fourth-order operator acting on $\ell$ of the form
\begin{equation}\label{eq:appC-curv-mu-formal}
\frac{\delta \mathcal F_{\mathrm{C}}}{\delta p}
=
\kappa\Big(\|\nabla^2\log p\|^2+\mathcal L_{\mathrm{C}}[\log p]\Big),
\end{equation}
where $\mathcal L_{\mathrm{C}}$ denotes the resulting linear differential operator obtained by distributing the derivatives in the second term of Eq.~\ref{eq:appC-curv-var}. The explicit expanded form is lengthy and not required for the qualitative conclusions drawn in the main text; it can be generated routinely by symbolic differentiation and integration by parts. Higher-order variational regularizers of this type are standard in PDE-based regularization models \cite{chan2000higher}.

\section{Minimal analytic comparison in one dimension}\label{app:minimal-example}

We consider $X=\mathbb{R}$, $\pi\equiv 1$, constant temperature $\theta>0$, and an external potential $U(x)=\frac12 x^2$. We take the probabilistic mobility $\mathbb{M}(p)=p$ so that $j=-p\,\partial_x\mu$ and the balance closure reads $\partial_t p=\partial_x(p\,\partial_x\mu)$. 

For the KL baseline, $\phi_{\mathrm{ent}}=\theta\,p\log p$ and $\mu=U+\theta(1+\log p)$. The stationary no-flux condition $j=0$ implies $\partial_x\mu=0$, hence $\mu=\mu_\ast$ and
\begin{equation}\label{eq:appD-gauss}
p_\ast(x)=Z^{-1}\exp \left(-\frac{x^2}{2\theta}\right),
\qquad Z=\sqrt{2\pi\theta}.
\end{equation}
The dissipation term induced by the entropy production is $\int_{\mathbb{R}} p|\partial_x\mu|^2\,dx\ge 0$, and it vanishes at equilibrium.

For a bounded entropic potential $\mu_{\mathrm{ent}}=\theta\,\psi(p)$ with $\psi$ chosen as in Eq.~\ref{eq:psi-saturating}, the stationary condition becomes $U(x)+\theta\psi(p(x))=\mu_\ast$, hence, on the active range of $\psi$,
\begin{equation}\label{eq:appD-sat-inversion}
p_\ast(x)=\psi^{-1}\left(\frac{\mu_\ast-\tfrac12 x^2}{\theta}\right).
\end{equation}
If $\psi$ has a finite lower bound, the density is set to zero when the argument in Eq.~\ref{eq:appD-sat-inversion} falls below that bound, yielding compact support and sparsity; if $\psi(u)\to -\infty$ as $u\to 0^+$ and $|\psi(u)|/|\log u|\to\infty$, then $p_\ast$ exhibits heavier tails relative to Eq.~\ref{eq:appD-gauss}. In all cases, boundedness of $\psi$ and of the logarithmic-scale derivative $u\psi'(u)$ prevents arbitrarily large entropic driving forces arising from growth of the potential scale in transient regimes, and the entropy production retains the quadratic form $\int_{\mathbb{R}} p|\partial_x\mu|^2\,dx$.

Including the Fisher regularizer $\mathcal F_{\mathrm{F}}[p]=\int_{\mathbb{R}}\kappa|\partial_x\sqrt p|^2\,dx$ adds $\mu_0^{\mathrm{F}}=-\kappa(\partial_{xx}\sqrt p)/\sqrt p$ to the potential, so that
\begin{equation}\label{eq:appD-mu-fisher-1d}
\mu(x)=\frac12 x^2+\theta(1+\log p)-\kappa\,\frac{\partial_{xx}\sqrt p}{\sqrt p}.
\end{equation}
Stationarity $\partial_x\mu=0$ yields a second-order nonlinear boundary-value problem for $\sqrt p$ that balances confinement, entropic spreading, and Fisher smoothing. The corresponding entropy production density remains $\Xi=p|\partial_x\mu|^2\ge 0$ under the mobility choice, and the Fisher term selectively damps high-frequency components of $\sqrt p$, providing smoothing without modifying the balance--entropy routing.

\section*{Acknowledgments}
The research is supported by the Fundamental Research Funds for the Central Universities (Grant No.02302350113).

\bibliographystyle{apalike}
\bibliography{PDELinfo}

\end{document}